\providecommand\fulljustify{%
  \let\\\@centercr
  \leftskip\z@%
  \rightskip\z@%
  \parfillskip\z@\@plus 1fill\relax%
}
\newtheorem{definition}{Definition}
\newtheorem{lemma}[definition]{Lemma}
\newtheorem{theorem}[definition]{Theorem}
\newtheorem{proposition}[definition]{Proposition}
\newtheorem{conjecture}[definition]{Conjecture}
\global\long\def\one{\mathds{1}}
\global\long\def\trace{\operatorname{Tr}}
\global\long\def\const{\operatorname{const}}
\global\long\def\ketbra#1#2{\ket{#1}\!\bra{#2}}
\newcommand{\dR}{\mathds{R}}
\newcommand{\maxover}[1][]{\underset{#1}{\mathrm{max}}}
\newcommand{\optover}[1][]{\underset{#1}{\mathrm{opt}}}
\newcommand{\subto}{\mathrm{~s.t.}}
\DeclareMathOperator{\vol}{vol}
\DeclareMathOperator{\SEP}{SEP}
\newcommand{\Sep}{\mathrm{SEP}}
\newcommand{\norm}[1]{\left\lVert#1\right\rVert}
\begin{document}

\title{Confident entanglement detection via separable numerical range}

\author{Timo Simnacher}
\affiliation{Naturwissenschaftlich-Technische Fakult\"at, Universit\"at Siegen,
Walter-Flex-Str. 3, D-57068 Siegen, Germany}
\affiliation{Faculty of Physics, Astronomy and Applied Computer Science,
Jagiellonian University, ul. {\L}ojasiewicza 11, 30-348 Krak{\'o}w, Poland}

\author{Jakub Czartowski}
\affiliation{Faculty of Physics, Astronomy and Applied Computer Science,
Jagiellonian University, ul. {\L}ojasiewicza 11, 30-348 Krak{\'o}w, Poland}

\author{Konrad Szyma{\'n}ski}
\affiliation{Faculty of Physics, Astronomy and Applied Computer Science,
Jagiellonian University, ul. {\L}ojasiewicza 11, 30-348 Krak{\'o}w, Poland}

\author{Karol {\.Z}yczkowski}
\affiliation{Faculty of Physics, Astronomy and Applied Computer Science,
Jagiellonian University, ul. {\L}ojasiewicza 11, 30-348 Krak{\'o}w, Poland}
\affiliation{Center for Theoretical Physics, Polish Academy of Sciences,
al. Lotnik{\'o}w 32/46, 02-668 Warszawa, Poland}

\date{\today}

\begin{abstract}
  We investigate the joint (separable) numerical range of multiple measurements, i.e., the regions of expectation values accessible with (separable) quantum states for given observables. This not only enables efficient entanglement detection, but also sheds light on the geometry of the set of quantum states.
  More precisely, in an experiment, if the confidence region for the obtained data and the separable numerical range are disjoint, entanglement is reliably detected. Generically, the success of such an experiment is more likely the smaller the separable numerical range is compared to the standard numerical range of the observables measured. We quantify this relation using the ratio between these two volumes and show that it cannot be arbitrarily small, giving analytical bounds for any number of particles, local dimensions as well as number of measurements. Moreover, we explicitly compute the volume of separable and standard numerical range for two locally traceless two-qubit product observables, which are of particular interest as they are easier to measure in practice. Furthermore, we consider typical volume ratios for generic observables and extreme instances.
\end{abstract}

\maketitle

\section{Introduction}
Quantum physics allows for correlations that are impossible in classical physics.
One of the most fundamental and remarkable manifestation of genuine quantum correlations is entanglement \cite{entanglement}, which is required for quantum steering \cite{steering} and nonlocality \cite{nonlocality}.
A bipartite quantum state $\rho_{AB}$ is called \textit{separable} if it can be written as
\begin{equation}
  \rho_{AB} = \sum_j p_j \rho_A^{(j)} \otimes \rho_B^{(j)},
\end{equation}
where the $p_j$ form a probability distribution and $\rho_A^{(j)}$, $\rho_B^{(j)}$ are quantum states of the first and second particle, respectively.
If the state $\rho_{AB}$ cannot be written in this form, then it is called \textit{entangled}.
In quantum information theory, entanglement is considered a resource that can be used to achieve certain tasks, e.g., entanglement has proved useful for quantum metrology, communication, and computation.
Hence, verifying entanglement in experiments is essential and many methods have been developed \cite{entanglementdetection}.

Here, we consider multiple observables and the measurements of their expectation values.
The accessible regions for vectors of expectation values of $k$ observables for (separable) quantum states is given by the (separable) numerical range.
If, for a given state, the measurement results give a point outside the separable numerical range, then that state must have been entangled.
Thus, we first provide insight into how this approach can be used for entanglement detection.
Secondly, since the (separable) numerical range is ultimately an affine transformation of a projection of the (separable) quantum state space, our investigation also sheds light on the geometry of quantum states, especially on the relation between the separable and the general quantum state space geometry.
Finally, in practical experiments, statistical and systematic errors lead to a confidence region instead of a single point contained in the numerical range of the observables measured.
We compare the volumes of separable and standard numerical range to gain intuition on how useful the considered measurements are for entanglement detection in practical scenarios.

\subsection{Methods}
\begin{figure}[t!]
  \centering
  \includegraphics[width=0.7\linewidth]{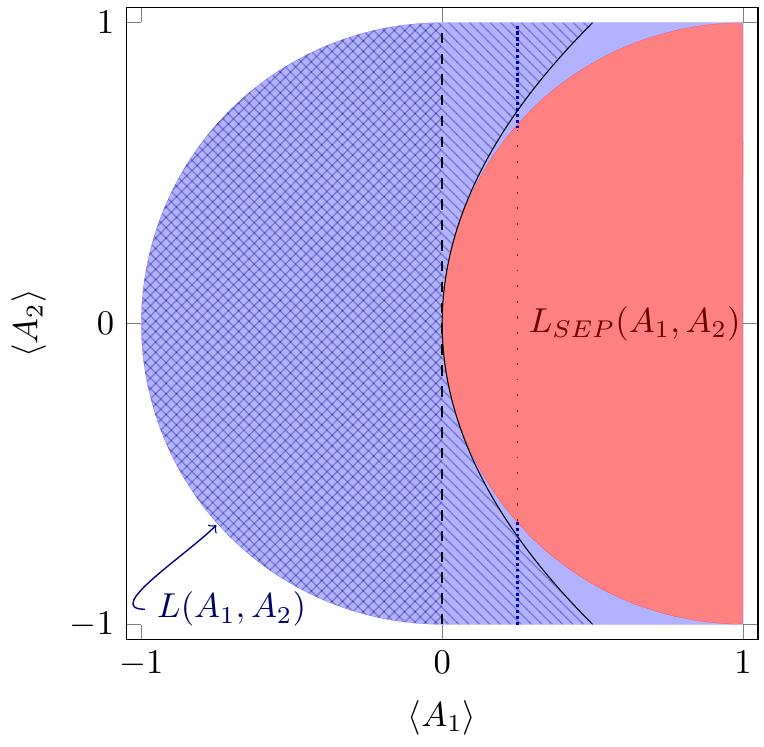}
  \caption{An illustration of different variants of entanglement witnesses.
           $L(A_1,A_2)$ and $L_{\SEP}(A_1,A_2)$ are the (separable) numerical range of the two-qubit observables $A_1 = \ket{00}\bra{11} + \ket{01}\bra{01} + \ket{10}\bra{10} + \ket{11}\bra{00}$ and $A_2 = -i \ket{00}\bra{11} + \ket{01}\bra{01} - \ket{10}\bra{10} + i \ket{11}\ket{00}$.
           The dashed line indicates the entanglement witness $W = A_1$, the dotted line the ultrafine entanglement witnesses with $\langle W_1 \rangle = \langle A_1 \rangle = 1/4 = \omega_1$ and $W_2 = \sqrt{7}/4 - A_2$ or $W_2 = (1-\sqrt{7})/4 + A_2$, and the curved solid line the nonlinear entanglement witness $\langle W_{\text{NL}} \rangle = \langle A_1 \rangle - {\langle A_2 \rangle}^2$.}
  \label{fig:compwit}
\end{figure}
For any entangled state $\rho$, there exists an observable $W$ such that $\trace W\rho < 0$ and $\trace W\sigma \ge 0$ for all separable states $\sigma$.
An observable with these properties is called an entanglement witness \cite{entanglementdetection}.
Entanglement witnesses are the standard tool for entanglement detection and are employed ubiquitously. 
Although a single measurement, that is repeated sufficiently many times, suffices to detect entanglement, this measurement, i.e. the entanglement witness, might be a highly entangled observable and therefore hard to implement in practice.
Ultrafine entanglement witnesses extend the concept of entanglement witnesses by taking into account multiple measurements for more reliable detection \cite{ultrafine,marinikolai}.
Also, measurements that are easier to implement, such as product observables, can be combined to simplify the detection in the experiment.
To do so, a first observable $W_1$ determines via measurement a subset of states that need to be considered constrained by $\trace W_1\rho = \omega_1$, where $\omega_1$ is the obtained measurement result.
Then, the second observable $W_2$ only needs to satisfy that $\trace W\sigma \ge 0$ for all separable states $\sigma$ with $\trace W_1\sigma = \omega_1$ allowing for more effective entanglement detection.
Lastly, nonlinear entanglement witnesses combine multiple measurements in a nonlinear fashion to improve entanglement detection \cite{nonlin1,nonlin2,nonlin3,nonlin4}.
While ordinary and ultrafine entanglement witnesses only provide a polyhedral approximation to the separable numerical range, nonlinear witnesses can observe the structure of the convex set generally better; see Fig.~\ref{fig:compwit} for a schematic visualization of the different methods.

In contrast to the different variants of entanglement witnesses, for the general situation of multiple measurements, the joint (separable) numerical range provides a comprehensive framework to tackle the problem of entanglement detection \cite{numericalrangeweb,gawron2010,wutang,czartowski2019}.
Throughout the paper, we are going to use the following notation.
\begin{definition}
  Let $A_1,\dots,A_k$ be Hermitian operators.
  Then, the set
  \begin{equation}
    L_X(A_1,\dots,A_k) = \{ (\trace\rho A_1, \dots, \trace\rho A_k) \in \dR^k \,|\, \rho \in X \} 
  \end{equation}
  is called the \textit{joint (restricted) numerical range} of $A_1,\dots,A_k$ where the set $X$ restricts the accessible states.
  If $X$ is the set of all quantum states, $L$ is simply the (joint) numerical range of $A_1,\dots,A_k$;
  if it is the set of all separable quantum states, $L_\Sep$ is called the separable (joint) numerical range of $A_1,\dots,A_k$.
\end{definition}
Furthermore, the Euclidean volume of $L_X(A_1,\dots,A_k)$ is denoted by $\vol L_X$.
This is the natural volume measure for experiments since it is the relevant measure for the confidence region obtained by measuring a given quantum state.

\subsection{Experimental confidence region}\label{sec:expconf}
An entangled state whose entanglement is in principle detectable by observables $A_1,\dots,A_k$ corresponds to a point in the numerical range $L(A_1,\dots,A_k)$ which is outside the separable numerical range $L_\Sep(A_1,\dots,A_k)$.
In a realistic experiment, however, only a finite number of measurement results can be collected which leads to an $\alpha$-confidence region inside the numerical range that covers the exact point given by the underlying state with probability at least $1-\alpha$.
Such a confidence region can be obtained, e.g., using Hoeffding's tail inequality \cite{hoeffding}.
In this case, it holds that the probability of an estimator $a_j$ for the expectation value of $A_j$ having at least a distance of $t$ from the actual expectation value is bounded by
\begin{equation}
  P\left(\Delta_j \ge t \right) \le 2 \exp\left\{ - \frac{2m t^2}{\left[\lambda_{\max}(A_j) - \lambda_{\min}(A_j)\right]^2} \right\},
\end{equation}
where $\Delta_j = \left| a_j - \trace \rho A_j \right|$ and the estimator $a_j$ is obtained as the average from $m$ measurements on the state $\rho$. Further, $\lambda_{\max}(A_j)$, $\lambda_{\min}(A_j)$ are the largest and smallest eigenvalues of $A_j$, respectively.
Since the individual measurements are independent, so are the estimators $a_j$.
Let us rescale the observables such that $\lambda_{\max}(A_j) - \lambda_{\min}(A_j) = 1$.
Then, we obtain a confidence region in the form of a hyperrectangle via
\begin{equation}
  P\left(\exists j:\Delta_j \ge t_j \right) \le 1 - \left[ 1 - 2 \exp\left( -2m t_j^2 \right) \right]^k,
\end{equation}
where each observable is measured $m$ times, and hence there are $km$ measurements done in total.
The shape of the hyperrectangle can be adjusted by choosing appropriate $t_j$.

Independent from the specific shape and origin of the experimenter's confidence region, they can exclude a separable quantum state as cause of the data with statistical significance only if the separable numerical range and the confidence region are disjoint.
Thus, generically, choosing observables such that the volume ratio between separable and standard numerical range is small provides a higher statistical significance for entanglement detection.
This is because the confidence region is more likely to lie outside the separable numerical range.
Importantly, this question is different from maximizing the number, i.e.~the volume, of entangled states that can be detected by infinite repetition of the measurements with infinite precision.

This reasoning motivates us to investigate the volume ratio of the (separable) numerical range.
As it turns out, it cannot be arbitrarily small and we provide bounds for any number of particles, local dimensions, and number of observables.
Moreover, we focus on product observables since they are easier to implement in experiments.
For two qubits and two locally traceless product observables, we provide explicit expressions for the volume of their standard and separable numerical range.

\section{Extreme instances}
In this section, we investigate minimal volume ratios of the separable numerical range compared to the standard numerical range that can be reached for given number of particles, local dimensions, and number of measurements.
More precisely, we define:
\begin{definition}
  For $k$ independent measurements on a quantum system consisting of $n$ particles and local dimensions $\bm{d} = (d_1,\dots,d_n)$, we denote the minimal volume ratio of the separable numerical range compared to the standard numerical range as
  \begin{align}
    \mu_{n,\bm{d},k} = \min_{A_1,\dots,A_k} \frac{\vol L_\Sep(A_1,\dots,A_k)}{\vol L(A_1,\dots,A_k)}.
  \end{align}
  If $d_1 = \dots = d_n = d$, we just write $d$ instead of $\bm{d}$.
  Further, if $n=2$, we omit the corresponding subscript.
  Finally, we denote the total dimension of the Hilbert space as $D = d_1 \cdots d_n$.
\end{definition}
As we discussed in section~\ref{sec:expconf}, measurements reaching the minimal volume ratio are in some sense optimal for entanglement detection in practical experiments.
Thus, we find lower bounds for $\mu_{n,\bm{d},k}$ as well as measurements with low $\mu_{n,\bm{d},k}$, consequently also providing upper bounds.

\subsection{General properties of the volume ratio}\label{sec:generalbounds}
In the following, we prove general properties of the volume ratio $\mu_{n,\bm{d},k}$ as well as a lower bound independent of the specific observables.
First, we consider the case in which the observables $A_1,\dots,A_k$ are not linearly independent.
Then, the (separable) numerical range is contained in a lower-dimensional manifold and the volume in $\dR^k$ vanishes.
However, we can still define the relative volume comparing the volumes of the manifolds.
More specifically, we have the following result.
\begin{proposition}
For Hermitian observables $A_1,\dots,A_k$, let $B_1,\dots,B_{k'}$ be a maximal linearly independent subset of $\{A_1 - \frac{\one}{D}\trace A_1,\dots,A_{k} - \frac{\one}{D}\trace A_{k}\}$. Then, it holds that
\begin{equation}
  \frac{\vol L_\Sep(A_1,\dots,A_k)}{\vol L(A_1,\dots,A_k)} = \frac{\vol L_\Sep(B_1,\dots,B_{k'})}{\vol L(B_1,\dots,B_{k'})},
\end{equation}
i.e., we can simply ignore observables that are linearly dependent.
\end{proposition}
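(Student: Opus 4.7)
The plan is to prove the identity by two reductions that each multiply $\vol L$ and $\vol L_\Sep$ by the same nonzero factor, so that the ratio is preserved throughout. The first reduction is to translate each observable to its traceless part: writing $\tilde A_j := A_j - \tfrac{\one}{D}\trace A_j$ and using $\trace\rho = 1$ for every quantum state, one obtains $\trace(\rho\tilde A_j) = \trace(\rho A_j) - \tfrac{1}{D}\trace A_j$. Consequently, both $L(\tilde A_1,\dots,\tilde A_k)$ and $L_\Sep(\tilde A_1,\dots,\tilde A_k)$ are obtained from their untilded counterparts by translation by the same fixed vector $c = (\tfrac{1}{D}\trace A_j)_{j=1}^k \in \dR^k$. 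Translations are isometries of $\dR^k$, so volumes, and hence their ratio, are unchanged.

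After this first step, it remains to pass from the possibly linearly dependent family $(\tilde A_1,\dots,\tilde A_k)$ to the linearly independent subfamily $(B_1,\dots,B_{k'})$. Expanding $\tilde A_j = \sum_{i=1}^{k'} T_{ji} B_i$ defines an injective linear map $T:\dR^{k'}\to\dR^k$ for which $(\trace(\rho\tilde A_j))_{j=1}^k = T\,(\trace(\rho B_i))_{i=1}^{k'}$, and therefore $L(\tilde A_1,\dots,\tilde A_k) = T\bigl(L(B_1,\dots,B_{k'})\bigr)$, and analogously for the separable case. For any Borel set $S\subset\dR^{k'}$, the $k'$-dimensional Hausdorff measure of $T(S)$ inside $\dR^k$ equals $\sqrt{\det(T^\top T)}\,\vol_{k'}(S)$, and this multiplicative factor depends only on $T$. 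Applying it to $L_\Sep(B_1,\dots,B_{k'})$ and $L(B_1,\dots,B_{k'})$ produces the numerator and denominator on the left-hand side, with the common factor cancelling.

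The only point that deserves clarification rather than real work is the interpretation of $\vol$ on the left-hand side when $k'<k$: the sets then lie in a proper affine subspace of $\dR^k$, where the ordinary $k$-dimensional Lebesgue measure vanishes. The statement must therefore be read with $\vol$ understood as the $k'$-dimensional Hausdorff measure (equivalently, Lebesgue measure inside the affine hull of the numerical range), so that both sides are genuinely well defined. Once this convention is fixed, the argument above is purely linear-algebraic; no input from the structure of the separable state space is needed, and it is legitimate to discard linearly dependent observables without affecting the volume ratio.
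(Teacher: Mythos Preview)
Your proof is correct and follows essentially the same approach as the paper: both reduce first by translation to traceless observables and then exploit that a linear map scales both volumes by the same Jacobian-type factor, so the ratio is preserved. The paper carries out the second step iteratively (shearing one dependent coordinate to zero at a time), whereas you do it in one shot via the injective map $T$ and the area formula $\vol_{k'}(T(S)) = \sqrt{\det(T^\top T)}\,\vol_{k'}(S)$; your explicit remark that $\vol$ must be read as $k'$-dimensional Hausdorff measure when $k'<k$ makes precise what the paper only alludes to informally.
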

\begin{proof}
Adding multiples of the identity to the observables corresponds to a translation of the (separable) numerical range, and hence, does not change the volume ratio.
More generally, affine transformations also do not change the relative volume; for details, see the proof of Proposition~\ref{prop:maxset} in Appendix~\ref{app:maxset}.
Let $A_j - \frac{\one}{D}$ depend linearly on the $B_1,\dots,B_k$, i.e., $A_j - \frac{\one}{D} = \sum_l x_l B_l$.
Then, we apply the transformation $a_j \rightarrow \tilde{a}_j = a_j - \sum_l x_l b_l$, where $a_j$ is the variable corresponding to the observable $A_j - \frac{\one}{D}$ and similarly for $b_j$ corresponding to $B_j$.
That means, we have
\begin{equation}
  \begin{aligned}
    \tilde{a}_j &= a_j - \sum_l x_l b_l \\
                &= \trace \left[ (A_j - \frac{\one}{D}) - \sum_l x_l B_l \right] \rho = 0,
  \end{aligned}
\end{equation}
for any state $\rho$.
Hence, we obtain the volume in the subspace given by $\tilde{a}_j = 0$ which simply lowers the dimension.
Applying this procedure iteratively proves the statement.
\end{proof}
Thus, it suffices to restrict the observables to be linearly independent, traceless, and bounded.

The main idea to obtain a general lower bound for the volume ratio $\mu_{n,\bm{d},k}$ is to compute the volume of the (separable) numerical range via integration using polar coordinates.
This idea is inspired by the approach used in Refs.~\cite{cowen,gawron2014}, however, while the authors of these works focus on simply finding all the boundary points, i.e., the extreme points whose convex hull forms the (separable) numerical range, we find the boundary point in a certain direction. 
\begin{lemma}\label{lem:volint}
  For Hermitian operators $A_1,\dots,A_k$ and a star-convex state set $X$ around the maximally mixed state, the $k$-dimensional volume of the restricted numerical range is given by
  \begin{equation}
    \begin{aligned}
      \vol L_X = \int_0^{2\pi}d\varphi \int_0^\pi d\vartheta_1 \cdots \int_0^\pi d\vartheta_{k-2} \, \frac{1}{k} R^k \prod_{j=1}^{k-2} \sin^j \vartheta_j,
    \end{aligned}
  \end{equation}
  where the radius $R(\varphi,\vartheta_1,\dots,\vartheta_{k-2})$ is determined by
  \begin{equation}
    \begin{aligned}
      R = \,&\maxover[\rho \in X]  && \trace \rho [\bm{\hat{r}} \cdot \bm{A}] \\
            &\subto && \trace \rho [\bm{\hat{\varphi}} \cdot \bm{A}] = 0, \\
            &       && \trace \rho [\bm{\hat{\vartheta}_j} \cdot \bm{A}] = 0 \text{ for } j=1,\dots,k-2,
    \end{aligned}
  \end{equation}
  and $\bm{A} = (A_1 - \frac{\one}{\trace\one}\trace A_1,\dots,A_k - \frac{\one}{\trace\one}\trace A_k)^T$. The vectors $\bm{\hat{r}},\bm{\hat{\varphi}},\bm{\hat{\vartheta}_1},\dots,\bm{\hat{\vartheta}_{k-2}}$ are the unit vectors of the $k$-dimensional polar coordinates.
\end{lemma}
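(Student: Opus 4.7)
The plan is to centre coordinates at the image of the maximally mixed state, use star-convexity to parametrise $L_X(\bm{A})$ by radial rays, and identify the ray-boundary radius with the stated constrained maximum. As a first step I would replace each $A_j$ by its traceless part $A_j - \frac{\one}{D}\trace A_j$. This merely translates $L_X$ rigidly (volume-preserving) and sends $\one/D \mapsto 0$, so I may assume without loss of generality that $\trace A_j = 0$ and that the origin lies in $L_X(\bm{A})$. Because the map $\rho \mapsto \trace\rho\,\bm{A}$ is linear and sends the star-centre $\one/D$ of $X$ to $0$, the image $L_X(\bm{A})$ is itself star-convex about the origin; hence each ray from $0$ in a unit direction $\bm{\hat{r}}$ meets $L_X(\bm{A})$ in a single segment $[0,R(\bm{\hat{r}})\bm{\hat{r}}]$, closed by the standard compactness of the state spaces considered.

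Next I would switch to spherical coordinates $(r,\varphi,\vartheta_1,\ldots,\vartheta_{k-2})$ in $\dR^k$, whose Euclidean volume element is $r^{k-1}\prod_{j=1}^{k-2}\sin^{j}\vartheta_j\,dr\,d\vartheta_1\cdots d\vartheta_{k-2}\,d\varphi$ with the angle-ordering convention fixed in the statement. Star-convexity about the origin guarantees the radial description $L_X(\bm{A}) = \{r\bm{\hat{r}} : 0 \le r \le R(\bm{\hat{r}})\}$, so the $r$-integral runs from $0$ to $R(\varphi,\vartheta_1,\ldots,\vartheta_{k-2})$, covers the set exactly once, and contributes the factor $\frac{1}{k}R^{k}$ together with the outer angular integral displayed in the lemma.

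It remains to identify $R$ with the stated constrained optimisation. The vectors $\bm{\hat{r}},\bm{\hat{\varphi}},\bm{\hat{\vartheta}_1},\ldots,\bm{\hat{\vartheta}_{k-2}}$ form an orthonormal frame of $\dR^k$, so a point $\bm{\ell} = \trace\rho\,\bm{A}$ lies on the ray through $\bm{\hat{r}}$ exactly when its projections onto $\bm{\hat{\varphi}}$ and each $\bm{\hat{\vartheta}_j}$ vanish; these are precisely the equality constraints appearing in the optimisation. Subject to these constraints the radial coordinate equals $\trace\rho[\bm{\hat{r}}\cdot\bm{A}]$, and its maximum over $\rho\in X$ is the boundary radius $R$ by definition of $L_X(\bm{A})$.

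The main subtlety is justifying the polar parametrisation rigorously: it requires that along every ray from the origin $L_X(\bm{A})$ is a single connected segment, so that the naive radial integral does not miss gaps or double-count, and this is exactly what star-convexity supplies. The edge cases $k=1$ (no angular coordinates; the formula collapses to $\lambda_{\max}(A_1)-\lambda_{\min}(A_1)$) and $k=2$ (only $\varphi$ survives and the empty product equals one) require a short separate check, but present no real difficulty.
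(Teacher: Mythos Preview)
Your argument is correct and follows essentially the same route as the paper's own proof: translate to traceless observables so that $\one/D$ maps to the origin, use star-convexity of $X$ about $\one/D$ to infer star-convexity of $L_X(\bm{A})$ about $0$, integrate in spherical coordinates, and identify the ray-boundary radius with the constrained maximum via the orthonormality of $\bm{\hat r},\bm{\hat\varphi},\bm{\hat\vartheta}_j$. Your write-up is in fact slightly more explicit than the paper's in justifying why the equality constraints single out the ray through $\bm{\hat r}$ and in noting the $k=1,2$ edge cases.
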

The proof of Lemma~\ref{lem:volint} is shown in Appendix~\ref{app:volint}.
This result not only gives an interesting characterization of the volume, it can also be directly implemented using semi-definite programming \cite{sdp} to efficiently approximate the volume if the set $X$ can be characterized using semi-definite and linear constraints which, e.g., is the case if $X$ is the set of all quantum states or the set of quantum states with positive partial transpose.
In the case of two qubits, the separable states are exactly those with positive partial transpose \cite{pptsufficient}, which allows for efficient numerical treatment via semi-definite programming.

When we are only interested in the relative volume $\vol L_X / \vol L_Y$, we can obtain a lower bound by comparing the optimizations of the radii.
In particular, for $X$ being the set of separable states and $Y$ being the set of all quantum states, we obtain:
\begin{theorem}\label{thm:intbound}
  For $k$ observables and $n$-partite quantum systems with local dimensions $\bm{d} = (d_1,\dots,d_n)$ and total dimension $D = d_1 \cdots d_n$, the relative volume of the numerical range restricted to separable states compared to all quantum states is lower bounded by
  \begin{equation}
    \mu_{n,\bm{d},k} \ge \left[ \frac{b}{D} \sqrt{\frac{D-1}{D-b^2}} \right]^k,
  \end{equation}
  where
  \begin{equation}
    b = \sqrt{\frac{D^n}{(2D-1)^{n-2}(D^2-1)+1}}.
  \end{equation}
  Moreover, for a bipartite $d\times d$-system, i.e., $n=2$ and $d_1=d_2=d$, it holds that
  \begin{equation}
    \mu_{d,k} \ge \frac{1}{\left(d^2-1\right)^k}.
  \end{equation}
\end{theorem}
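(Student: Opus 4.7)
The plan is to use Lemma~\ref{lem:volint} to reduce the volume ratio to a pointwise comparison of radius functions, and then to obtain this comparison by a ``mix toward the maximally mixed state'' construction combined with a Gurvits--Barnum-type separability ball around $\one/D$. Concretely, since $\one/D$ is separable and both the full state space and its separable subset are star-convex around it, Lemma~\ref{lem:volint} applies in both cases with the same angular measure $d\Omega$, yielding
\begin{equation*}
  \frac{\vol L_\Sep}{\vol L} = \frac{\int R_\Sep^k\, d\Omega}{\int R^k\, d\Omega}.
\end{equation*}
Hence any direction-independent inequality $R_\Sep \geq c R$ immediately implies $\mu_{n,\bm{d},k} \geq c^k$, reducing everything to a pointwise radius comparison.

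To produce such a pointwise estimate, I would fix a direction and take an optimizer $\rho^*$ of the program defining $R$, and then consider $\rho_t = t\rho^* + (1-t)\one/D$. Because the shifted observables are traceless, the tangential orthogonality constraints are preserved under this mixture and the radial objective evaluates to $tR$; hence $R_\Sep \geq tR$ for every $t$ such that $\rho_t$ is separable. The largest admissible $t$ is then controlled by a Hilbert--Schmidt ball of separable states around $\one/D$, combined with the universal bound $\norm{\rho^* - \one/D}_2 \leq \sqrt{(D-1)/D}$ valid for any quantum state. For bipartite $d \times d$ systems, the classical Gurvits--Barnum inradius $1/\sqrt{D(D-1)}$ (with $D = d^2$) gives $t \leq 1/(D-1)$ and hence the clean bound $\mu_{d,k} \geq 1/(d^2-1)^k$.

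For the general $n$-partite bound I would feed an analogous multipartite separability ball into the same recipe; the quantity $b$ in the statement then encodes the radius of this ball (together, possibly, with a refined upper bound on $\norm{\rho^* - \one/D}_2$ that exploits the orthogonality constraints on $\rho^*$), so that the elementary calculation produces $R_\Sep/R \geq (b/D)\sqrt{(D-1)/(D-b^2)}$ and the claimed factor raised to the power~$k$. The main obstacle is to identify and invoke precisely the right multipartite Gurvits--Barnum-type inradius: several non-equivalent refinements exist in the literature, and matching the exact formula for $b$ against the right version---keeping careful track of how $n$, the local dimensions, and the total dimension $D$ enter---is the delicate part of the argument.
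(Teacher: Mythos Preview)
Your proposal is correct and follows essentially the same approach as the paper: reduce to a pointwise radius inequality via Lemma~\ref{lem:volint}, mix the optimizer toward $\one/D$, and invoke a Gurvits--Barnum/Hildebrand separability result to bound the admissible mixing parameter. The only minor difference is that the paper cites the mixture-parameter result $\epsilon \le \frac{b}{D}\sqrt{(D-1)/(D-b^2)}$ (resp.\ $\epsilon \le 1/(d^2-1)$) directly from \cite{gurvits2005,hildebrand2007,zyczkowski1998volume,gurvits2002} rather than deriving it from the Hilbert--Schmidt inradius combined with the outradius bound $\norm{\rho^*-\one/D}_2 \le \sqrt{(D-1)/D}$; in particular, no refined bound on $\norm{\rho^*-\one/D}_2$ exploiting the orthogonality constraints is used or needed.
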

\begin{proof}
  For an $n$-partite quantum system $\rho$ with local dimensions $d_1,\dots,d_n$ and total dimension $D = d_1 \cdots d_n$, the state $\rho = (1-\epsilon) \frac{\one}{D} + \epsilon \sigma$ is fully separable for any state $\sigma$ if $\epsilon \le \frac{b}{D} \sqrt{\frac{D-1}{D-b^2}}$, where $b = \sqrt{\frac{D^n}{(2D-1)^{n-2}(D^2-1)+1}}$  \cite{gurvits2005, hildebrand2007}.
  Also, for bipartite systems with $d_1=d_2=d$, the same is true if $\epsilon \le 1/(d^2-1)$ \cite{zyczkowski1998volume,gurvits2002}.
  Let $\rho^*(\varphi,\vartheta_1,\dots,\vartheta_{k-2})$ be the optimal state in the maximization that determines $R(\varphi,\vartheta_1,\dots,\vartheta_{k-2})$ in Lemma~\ref{lem:volint} for $X$ being the set of all quantum states with objective value $R^*(\varphi,\vartheta_1,\dots,\vartheta_{k-2})$.
  Then, the state $\tilde{\rho} = \epsilon \rho^* + (1-\epsilon) \frac{\one}{D}$ with maximal $\epsilon$ such that full separability can be guaranteed with the above results is a feasible point of the corresponding optimization with $X$ being the set of fully separable states with objective value $\epsilon R^*$ since $\trace (A_j - \frac{\one}{\trace\one}\trace A_j) \frac{\one}{D} = 0$.
  Together with Lemma~\ref{lem:volint}, this proves the theorem.
\end{proof}

In the simplest example, i.e., a single Hermitian operator $A$ and a two-qubit system $\rho$, Theorem~\ref{thm:intbound} gives a lower bound of $1/3$ for the minimal relative volume $\mu_{2,1}$.

Another special case is the scenario in which the measurement results determine the underlying quantum state uniquely.
Although the volume ratio of the (separable) numerical range is not directly related to the share of nondetectable states, in the case of quantum state tomography, they coincide.
\begin{proposition}\label{prop:maxset}
Let $A_1,\dots,A_{D^2-1}$ be Hermitian observables such that the translated operators $A_1 - \frac{\one}{D}\trace A_1,\dots,A_{D^2-1} - \frac{\one}{D}\trace A_{D^2-1}$ are linearly independent. Then, the relative volume of the (separable) numerical range is given by
\begin{equation}
  \frac{\vol L_\Sep(A_1,\dots,A_{D^2-1})}{\vol L(A_1,\dots,A_{D^2-1})} = \frac{\vol_{HS} SEP}{\vol_{HS} ALL},
\end{equation}
where $\vol_{HS} SEP$ and $\vol_{HS} ALL$ denote the volume of separable and all states, respectively, w.r.t.~the Hilbert-Schmidt norm.
\end{proposition}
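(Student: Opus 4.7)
The plan is to observe that, under the stated hypothesis, the map sending a quantum state to its vector of expectation values is an affine bijection between two real affine spaces of dimension $D^2-1$. Every affine bijection rescales Lebesgue volume by a single constant factor (the absolute value of its Jacobian determinant), so such a map automatically preserves all volume ratios, and the proposition follows at once.

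To realize this concretely, I would first fix an orthonormal basis $\{E_0 = \one/\sqrt{D}, E_1, \dots, E_{D^2-1}\}$ of the space of Hermitian matrices with respect to the Hilbert-Schmidt inner product $(X,Y)\mapsto \trace(XY)$. Any unit-trace Hermitian operator then admits the unique expansion $\rho = \one/D + \sum_{j=1}^{D^2-1} r_j E_j$, and on the affine hyperplane of unit-trace Hermitian matrices the Hilbert-Schmidt measure coincides with the Lebesgue measure on the coordinates $(r_1,\dots,r_{D^2-1})$. In particular, $\vol_{HS}\mathrm{SEP}$ and $\vol_{HS}\mathrm{ALL}$ are simply the Lebesgue volumes of the coordinate images of the corresponding state sets.

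Next I would compute the map $\Phi(\rho) = (\trace\rho A_1,\dots,\trace\rho A_{D^2-1})$ in these coordinates. With $\tilde A_j = A_j - (\one/D)\trace A_j$ and using that $E_i$ is traceless for $i\ge 1$, one finds $\trace\rho A_j = \trace(A_j)/D + \sum_i r_i\,\trace(E_i\tilde A_j)$, so $\Phi$ is affine with linear part the matrix $M$ whose rows are the coordinate vectors of the $\tilde A_j$ in the basis $\{E_i\}_{i\ge 1}$. The hypothesis that the $\tilde A_j$ are linearly independent is therefore exactly the statement that $M$ is invertible; hence $\Phi$ is an affine bijection, giving $\vol L_\Sep = |\det M|\,\vol_{HS}\mathrm{SEP}$ and $\vol L = |\det M|\,\vol_{HS}\mathrm{ALL}$. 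Taking the ratio cancels $|\det M|$ and proves the claim.

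The only real subtlety I expect is keeping normalization and translation conventions consistent: one must verify that the Hilbert-Schmidt measure really agrees with Lebesgue measure in the chosen chart, and that the trace-shift from $A_j$ to $\tilde A_j$ (already shown in the previous proposition to leave volume ratios invariant) matches the translation needed to turn $\Phi$ into a linear map. Beyond this bookkeeping, no genuine difficulty arises; the proposition is essentially the statement that an invertible affine change of coordinates preserves relative volumes.
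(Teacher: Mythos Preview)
Your proposal is correct and follows essentially the same approach as the paper: both fix an orthonormal basis of traceless Hermitian matrices to identify the Hilbert-Schmidt measure with Lebesgue measure, then show that the expectation-value map is an invertible affine transformation (the paper phrases this via a dual basis and transition matrix $\Lambda$, you via the matrix $M$ of inner products), so that the common Jacobian factor cancels in the ratio.
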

This result comes from the fact that, in this case, the space of measurement results is a mere affine transformation of the quantum state space with respect to the Hilbert-Schmidt norm; a detailed proof is shown in Appendix~\ref{app:maxset}.
In the case of two qubits and $n = 15$, the volume ratio is strongly believed to be $\frac{\vol_{HS} SEP}{\vol_{HS} ALL} = \frac{8}{33}$ \cite{sepvol1,sepvol2,sepvol3,sepvol4}.
In general, we can find a lower bound on the volume ratio considering the $\epsilon$-ball around the maximally mixed state \cite{gurvits2005,hildebrand2007} compared to the volume of all states \cite{volALL}.
From this result for $n = D^2 - 1$, it might be possible to obtain bounds for a lower number of observables as well since this corresponds to some projection of convex bodies; see Appendix~\ref{app:projconvbody}.

\subsection{Two qubits}
In the simplest scenario, i.e., a single measurement of a two-qubit quantum system, Theorem~\ref{thm:intbound} gives a bound $\mu_{2,1} \ge \frac{1}{3}$ for the minimal volume ratio.
To obtain a better bound, we use so-called absolutely separable states, whose separability can be inferred from the eigenvalues of the density matrix \cite{abssep,ishizaka2000,verstraete2001}.
\begin{proposition}\label{prop:abssepbound}
It holds that $\mu_{2,1} \ge \sqrt{2} - 1 \approx 0.41$. Moreover, this is the best bound achievable when only absolutely separable states are considered.
\end{proposition}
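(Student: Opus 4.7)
The plan is to exploit the spectral characterization of absolutely separable two-qubit states: a state with sorted eigenvalues $q_1 \ge q_2 \ge q_3 \ge q_4 \ge 0$ and $\sum_i q_i = 1$ is absolutely separable if and only if $q_1 \le q_3 + 2\sqrt{q_2 q_4}$. Since absolute separability is preserved under unitary conjugation, for any Hermitian $A$ with sorted eigenvalues $a_1 \ge a_2 \ge a_3 \ge a_4$ (so $\vol L(A) = a_1 - a_4$) I can place an abs-sep spectrum $q$ on the eigenbasis of $A$ via any permutation; by the rearrangement inequality the extreme expectation values obtainable this way are $\sum_i q_i a_i$ and $\sum_i q_i a_{5-i}$. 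Choosing the \emph{same} $q$ at both extremes, the corresponding two separable states lie inside $L_\Sep(A)$, so
\begin{equation*}
  \vol L_\Sep(A) \ge \sum_i q_i (a_i - a_{5-i}) = (q_1 - q_4)(a_1 - a_4) + (q_2 - q_3)(a_2 - a_3) \ge (q_1 - q_4)(a_1 - a_4),
\end{equation*}
where the last inequality uses the sorting of both $q$ and $a$.

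The central step is to show $M := \max_q (q_1 - q_4) = \sqrt{2} - 1$ over abs-sep sorted spectra. The abs-sep inequality must be tight at the optimum, so I parameterise the boundary by $u = \sqrt{q_2}$ and $v = \sqrt{q_4}$, yielding $q_3 = (1-(u+v)^2)/2$, $q_1 = (1-(u-v)^2)/2$, and
\begin{equation*}
  f(u,v) = q_1 - q_4 = \tfrac{1}{2} - \tfrac{1}{2}u^2 + uv - \tfrac{3}{2}v^2.
\end{equation*}
The unique interior critical point $(u,v) = (0,0)$ violates $q_2 \ge q_3$, so the maximum lies on one of the boundaries $q_1 = q_2$, $q_2 = q_3$, $q_3 = q_4$, or $v = 0$. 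A one-dimensional Lagrangian on each boundary produces the candidate values $(\sqrt{3}-1)/2$, $\sqrt{2}-1$, $(\sqrt{3}-1)/2$, and $1/3$ respectively, so the maximum equals $\sqrt{2}-1$ and is attained on the face $q_2 = q_3$. Explicitly, setting $q_2 = q_3 = t$ and $r = \sqrt{q_4/t} \in [0,1]$ reduces the problem to maximising $(1 + 2r - r^2)/(r^2 + 2r + 3)$; the first-order condition $r^2 + 2r - 1 = 0$ selects $r = \sqrt{2}-1$, at which $M = \sqrt{2} - 1$ with optimal spectrum $\bigl((2\sqrt{2}-1)/4,\,1/4,\,1/4,\,(3-2\sqrt{2})/4\bigr)$. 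Substituting $M$ into the previous display yields $\vol L_\Sep(A) \ge (\sqrt{2}-1)\vol L(A)$ for every $A$, hence $\mu_{2,1} \ge \sqrt{2}-1$.

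For the tightness claim I exhibit the observable with spectrum $(1,0,0,-1)$. Because $a_2 - a_3 = 0$, the extra non-negative term in the first display vanishes; moreover, by unitary invariance of absolute separability, the maximum and minimum of $\trace A\rho$ over abs-sep $\rho$ equal $\pm \max_q (q_1 - q_4) = \pm(\sqrt{2}-1)$, since for this $A$ the inner expression reduces to a permutation of $q_1 - q_4$. Thus the set of expectation values reachable by abs-sep states is exactly $[-(\sqrt{2}-1),\sqrt{2}-1]$, of length $(\sqrt{2}-1)\vol L(A)$, so abs-sep states alone cannot yield a bound sharper than $\sqrt{2}-1$.

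The main obstacle is the case-by-case computation of $M$: one must execute the Lagrangian on each boundary edge and vertex of the abs-sep-tight region, and verify that the competing candidates $(\sqrt{3}-1)/2$, $\sqrt{2}/4$ (from the joint boundary $q_1 = q_2,\,q_3 = q_4$), and $1/3$ (from fully degenerate vertices) are all strictly smaller than $\sqrt{2}-1$. In particular one has to confirm that the optimum $r = \sqrt{2}-1$ on the $q_2 = q_3$ face lies strictly in the interior $(0,1)$, so that it is not pre-empted by an endpoint evaluation.
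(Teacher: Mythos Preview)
Your proposal is correct and takes a genuinely different, more streamlined route than the paper's proof.

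For the lower bound, the paper constructs an explicit family of absolutely separable states $\rho(a,b)$ and $\sigma(a,b)$ whose spectra depend on the middle eigenvalues $a,b$ of the normalised observable, then computes $\trace A\rho-\trace A\sigma$ as a function of $\delta=1-a-b$ and minimises it to $\sqrt{2}-1$ at $\delta=0$. Your argument instead decouples the state from the observable: using the same abs-sep spectrum $q$ at both extremes and the rearrangement inequality, you obtain the uniform estimate $\vol L_\Sep(A)\ge (q_1-q_4)\,\vol L(A)$ for \emph{every} $A$, so the whole problem collapses to the single optimisation $M=\max_q(q_1-q_4)$ over abs-sep spectra. This is conceptually cleaner and avoids the observable-dependent construction.

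For tightness, the paper works with $A(\tfrac12,\tfrac12)$ and solves a fresh constrained optimisation over both a probability vector (via majorisation) and an abs-sep spectrum, eventually arriving at $1/\sqrt{2}$ and $1-1/\sqrt{2}$ for the extreme abs-sep expectation values. You pick the affinely equivalent spectrum $(1,0,0,-1)$, for which von Neumann's trace inequality reduces $\max_{\text{abs-sep}}\trace A\rho$ directly to $\max_q(q_1-q_4)=M$; hence the tightness witness falls out of the very computation already done for the lower bound. The paper's approach yields, as a by-product, the adaptive optimal states for every $A$, whereas yours trades that information for a unified, shorter argument. Incidentally, your optimal spectrum $((2\sqrt{2}-1)/4,\,1/4,\,1/4,\,(3-2\sqrt{2})/4)$ coincides with the paper's optimiser on the $\lambda_2=\lambda_3$ face, and your boundary-edge values $(\sqrt{3}-1)/2$ and $1/3$ match the paper's subsidiary Lagrangian computations, so the two case analyses are in full agreement.
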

Proof of this proposition is given in Appendix~\ref{app:abssepbound}.
However, extensive numerical investigation suggests that also this bound is not tight and leads us to formulate the following conjecture.
\begin{conjecture}
For a single measurement on a two-qubit system, the minimal volume ratio is $\mu_{2,1} = \frac{1}{2}$.
\end{conjecture}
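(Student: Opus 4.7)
The plan is to establish matching upper and lower bounds $\mu_{2,1}\le 1/2$ and $\mu_{2,1}\ge 1/2$. For the upper bound I would take the projector $A=\ketbra{\Phi^+}{\Phi^+}$ onto the Bell state $\ket{\Phi^+}=(\ket{00}+\ket{11})/\sqrt{2}$, whose spectrum is $(1,0,0,0)$ so that $\vol L(A)=1$. Parametrising pure product states as $\ket{\alpha}=a_0\ket{0}+a_1\ket{1}$ and analogously $\ket{\beta}$, a direct calculation gives $\bra{\alpha\beta}A\ket{\alpha\beta}=\tfrac{1}{2}|a_0 b_0+a_1 b_1|^2\in[0,1/2]$, the upper bound following from Cauchy-Schwarz and both endpoints being attained on computational-basis product states. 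Hence $\vol L_{\Sep}(A)=1/2$ and the ratio is exactly $1/2$; in fact any projector onto a maximally entangled state does the job.

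For the matching lower bound I would normalize $A$ by translation and rescaling (operations that preserve the ratio) so that $\lambda_{\max}(A)=1$ and $\lambda_{\min}(A)=-1$, and aim to show $f_{\Sep}^{\max}(A)-f_{\Sep}^{\min}(A)\ge 1$, where $f_{\Sep}^{\max/\min}$ denote the endpoints of the interval $L_{\Sep}(A)$. The natural first step is to Schmidt-decompose the extremal eigenvectors as $\ket{\phi_1}=\sqrt{s}\,\ket{e_1 f_1}+\sqrt{1-s}\,\ket{e_2 f_2}$ and $\ket{\phi_4}=\sqrt{t}\,\ket{g_1 h_1}+\sqrt{1-t}\,\ket{g_2 h_2}$ with $s,t\ge 1/2$, and test $A$ on the Schmidt-basis product states. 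Using $|\braket{e_1 f_1|\phi_1}|^2=s$ together with $\lambda_i\ge -1$ for $i\ne 1$ yields $\bra{e_1 f_1}A\ket{e_1 f_1}\ge 2s-1$, and symmetrically $\bra{g_1 h_1}A\ket{g_1 h_1}\le 1-2t$, so that
\begin{equation*}
  f_{\Sep}^{\max}(A)-f_{\Sep}^{\min}(A)\ge(2s-1)-(1-2t)=2(s+t)-2.
\end{equation*}
This already settles the conjecture in the regime $s+t\ge 3/2$, in particular whenever at least one extremal eigenvector is a product state.

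The main obstacle is the complementary regime $s+t<3/2$, culminating in the worst case $s=t=1/2$ in which both extremal eigenvectors are maximally entangled and the Schmidt-basis test is vacuous, as illustrated by $A=\ketbra{\Phi^+}{\Phi^+}-\ketbra{\Phi^-}{\Phi^-}=(X\otimes X-Y\otimes Y)/2$, for which $\bra{00}A\ket{00}=\bra{11}A\ket{11}=0$ yet $L_{\Sep}(A)=[-1/2,1/2]$ is realised on the $\ket{\pm}$ basis. To handle this regime I would reduce $A$ to canonical form via local unitaries, diagonalising the correlation matrix $M$ in the Pauli expansion $A=\sum_{ij}m_{ij}\,\sigma_i\otimes\sigma_j+\sum_i u_i\,\sigma_i\otimes\one+\sum_j v_j\,\one\otimes\sigma_j+w\,\one\otimes\one$, and analyse the separable expectation $\vec r^T M\vec s+\vec u\cdot\vec r+\vec v\cdot\vec s+w$ on Bloch vectors $\vec r,\vec s\in S^2$ jointly with the eigenvalues of $A$, which are polynomial functions of $(M,\vec u,\vec v,w)$. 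Coupled with the SDP-dual characterisation $f_{\Sep}^{\max}(A)=\min\{\mu:\mu\one-A=P+Q^{T_B},\,P,Q\ge 0\}$, valid because PPT coincides with separability on two qubits, the remaining task is to prove that the two dual values for $A$ and $-A$ sum to at least $1$. Ruling out spoiler operators whose decomposable witnesses jointly beat this bound is precisely the refinement from the absolutely-separable estimate $\sqrt{2}-1$ of Proposition~\ref{prop:abssepbound} to the conjectured value $1/2$, and I expect it to be the hardest step.
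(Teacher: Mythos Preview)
This statement is presented in the paper as a \emph{conjecture}, not a theorem: the paper does not prove it. The only rigorous content surrounding $\mu_{2,1}$ is the upper bound $\mu_{2,1}\le 1/2$, obtained exactly as you do via the Bell-state projector, together with the lower bound $\mu_{2,1}\ge\sqrt{2}-1$ from absolutely separable states (Proposition~\ref{prop:abssepbound}). The gap between $\sqrt{2}-1$ and $1/2$ is explicitly left open, motivated only by numerical evidence.

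Your upper-bound argument is correct and identical to the paper's. Your attempted lower bound is more ambitious than anything the paper offers: the Schmidt-coefficient estimate $f_{\Sep}^{\max}-f_{\Sep}^{\min}\ge 2(s+t)-2$ is a valid partial result and a genuinely different line of attack from the absolutely-separable argument, but as you yourself note it is vacuous precisely in the worst case $s=t=1/2$. The sketch you give for that regime---local-unitary canonical form of the correlation matrix combined with the PPT/SDP dual characterisation of $f_{\Sep}^{\max}$---is reasonable in outline, but it is not a proof; ``ruling out spoiler operators whose decomposable witnesses jointly beat this bound'' is exactly the missing step, and neither you nor the paper supplies it. In short, your proposal matches the paper on the part that is actually established, and honestly flags as incomplete the part that remains an open problem.
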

This value is for example reached by $A = \ket{\phi^+}\bra{\phi^+}$ with eigenvalues 0 and 1, being the projector onto the maximally entangled state $\ket{\phi^+} = \frac{1}{\sqrt{2}}(\ket{00}+\ket{11})$.
From the given Schmidt decomposition, it is obvious that the maximal overlap of a product state with the maximally entangled state is $\frac{1}{2}$; also, the overlap of the product state $\ket{01}$ with $\ket{\phi^+}$ is 0.
Thus, it follows that $\mu_{2,1} \le \frac{1}{2}$.

\begin{figure}[t!]
  \centering
  \includegraphics[width=0.7\linewidth]{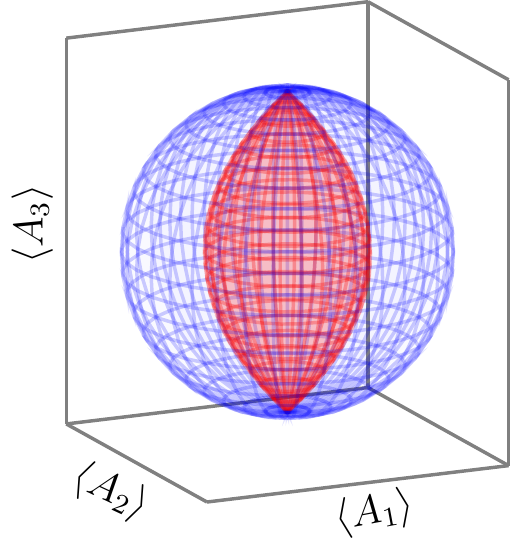}
  \caption{This figure shows the (separable) numerical range for a two-qubit quantum system and observables $A_1 = 0 \oplus X \oplus 0$, $A_2 = 0 \oplus Y \oplus 0$, and $A_3 = 0 \oplus Z \oplus 0$. The relative volume is given by $\frac{1}{5}$ and hence, $\mu_{2,3} \le \frac{1}{5}$. Also, when only the measurements $A_1$ and $A_2$ are considered, the relative volume is given by $\frac{1}{4}$ and hence, $\mu_{2,2} \le \frac{1}{4}$.}
  \label{fig:223_1over5}
\end{figure}

Concerning multiple observables, the bound in Theorem~\ref{thm:intbound} decreases exponentially with the number of measurements.
From Proposition~\ref{prop:maxset}, it is clear that, in contrast to the bounds found, the actual minimal volume ratio $\mu_{2,k}$ does not decrease exponentially with $k$.
Indeed, we find an instance of three measurements proving that $\mu_{2,3} \le \frac{1}{5} < \frac{8}{33}$ and hence, $\mu_{2,k}$ is a nonmonotonic function of $k$.
This example is given by the observables $A_1 = 0 \oplus X \oplus 0$, $A_2 = 0 \oplus Y \oplus 0$, and $A_3 = 0 \oplus Z \oplus 0$, where $X$, $Y$, and $Z$ are the Pauli matrices
\begin{align}
  X = \begin{pmatrix} 0 & 1 \\ 1 & 0 \end{pmatrix}, \quad
  Y = \begin{pmatrix} 0 & -i \\ i & 0 \end{pmatrix}, \quad
  Z = \begin{pmatrix} 1 & 0 \\ 0 & -1 \end{pmatrix},
\end{align}
and the symbol $\oplus$ denotes the direct sum of matrices where the number 0 is understood as a 1-by-1 matrix.
The resulting (separable) numerical range is visualized in Fig. \ref{fig:223_1over5}.
From the structure of the observables, it is clear that the joint numerical range is a Bloch ball since only the subspace of nonzero eigenvalues of the $A_j$ is relevant.
For separable states, the local unitaries $U_1 \otimes U_2$, where
\begin{align}
  U_1 &= \ket{0}\bra{0} + e^{i\varphi/2} \ket{1}\bra{1}, \\
  U_2 &= \ket{0}\bra{0} + e^{-i\varphi/2} \ket{1}\bra{1},
\end{align}
leave $A_3$ invariant while continuously transforming $A_1$ and $A_2$ as $A_1 \rightarrow \cos\varphi A_1 + \sin\varphi A_2$ and $A_2 \rightarrow -\sin\varphi A_1 + \cos\varphi A_2$, respectively.
Thus, the separable numerical range is symmetric w.r.t.~rotations around the axis of the third measurement, but it is not symmetric w.r.t.~other rotations.
As we are considering the three Pauli matrices, this asymmetry might seem counter-intuitive at first glance, however, while the eigenvectors with corresponding nonzero eigenvalues are product states for $A_3$, this is not true for $A_1$ and $A_2$, which explains the difference in symmetries for separable and all quantum states.
Also, restricting just to measurements $A_1$ and $A_2$ gives two concentric circles for the (separable) numerical range with a volume ratio of $\frac{1}{4}$ and hence, $\mu_{2,2} \le \frac{1}{4}$.
Detailed volume calculations are shown in Appendix~\ref{app:2quintances}.

\subsection{Product observables}
\begin{figure}[t!]
  \centering
  \includegraphics[width=0.9\linewidth]{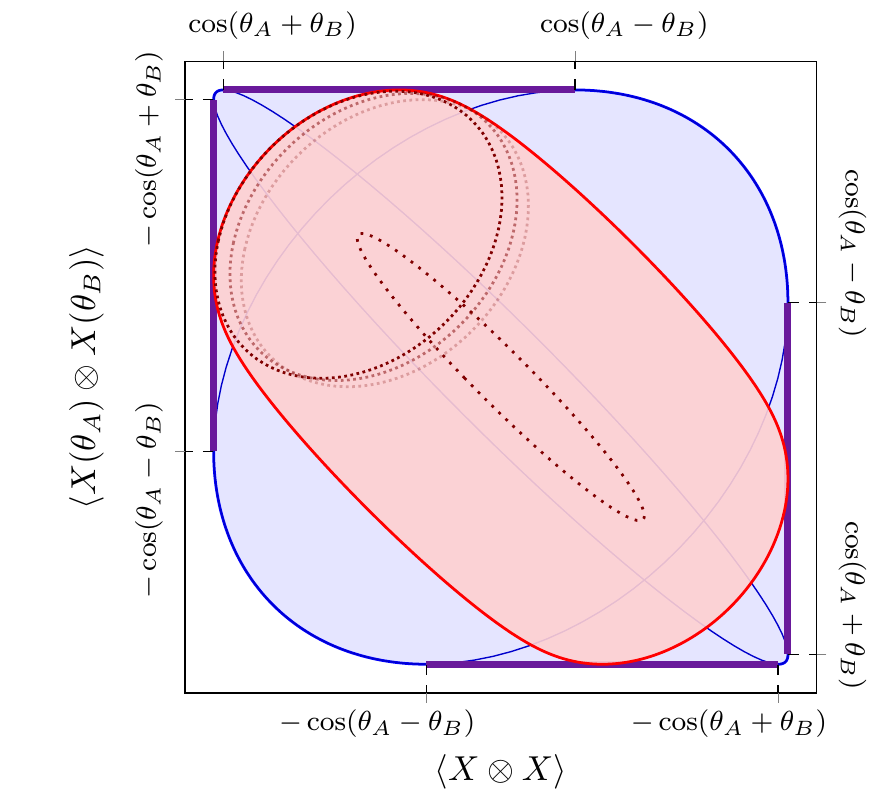}
  \caption{Illustration of the proof of Theorem~\ref{thm:222sep}, where $X(\theta) = X\cos\theta + Z\sin\theta$.
  For two locally traceless two-qubit observables, the separable numerical range is the Minkowski sum of two ellipses (red), while the standard numerical range is the convex hull of two other ellipses (blue, with the flat parts highlighted in purple).
  In this plot, $\theta_A = 3\pi/4$ and $\theta_B = \pi/3$.}
  \label{fig:prod_obs}
\end{figure}
In practice, it is much harder to implement highly entangled measurements compared to local, i.e. product, observables.
That is why, in the following, we focus on such simpler observables.
Product observables, i.e. $A=B_1 \otimes B_2$, are easy to measure in spatially separated laboratories or on a composite quantum system consisting of separate particles, and therefore provide a physically well motivated subset of all possible observables.
For a single product observable, separable and standard numerical range are obviously identical as the eigenvectors are product states.
Thus, let us consider the simplest non-trivial case which is a two-qubit system and two product observables $A_1,A_2$.

Moreover, we restrict ourselves to locally traceless observables, i.e., after applying suitable local unitaries, which apparently do not affect the relative volume, we have
\begin{align}
  A_1 &= X \otimes X, \label{eq:ltobs1} \\
  A_2 &= \left( \cos\theta_A X + \sin\theta_A Z \right) \otimes \left( \cos\theta_B X + \sin\theta_B Z \right). \label{eq:ltobs2}
\end{align}
For local measurements, the restriction to locally traceless observables corresponds to a constant offset of the measurement results which is trivial from an experimenter's point of view.

We denote the minimal volume ratio for (locally traceless) product measurements as $\mu_{n,\bm{d},k}^{\otimes}$ ($\mu_{n,\bm{d},k}^{\otimes,\text{LT}}$) where $n$ is the number of particles, $\bm{d}$ the vector of local dimensions, and $k$ the number of observables.
\renewcommand{\arraystretch}{1.4}
\begin{table}[t!]
  \centering
  \begin{tabular}{|c||c|c||c|c|}
  \hline
  $k$ & \multicolumn{2}{c||}{$\mu_{2,k}$} & \multicolumn{2}{c|}{$\mu_{2,k}^{\otimes,\text{LT}}$} \\[3pt]
  & low. bound & upp. bound & low. bound & upp. bound \\ \hhline{|=#=|=#==|}
  $~1~$ & $\sqrt{2} - 1$ & $~1/2~$ & \multicolumn{2}{c|}{$1$} \\ \hline
  $~2~$ & $~1/9~$ & $~1/4~$ & \multicolumn{2}{c|}{$1/2^*$} \\ \hline
  $~3~$ & $~1/27~$ & $~1/5~$ & $~1/27~$ & $~1/3~$ \\ \hline
  $~4~$ & $~1/81~$ & $~1/6~$ & $~1/81~$ & $~1/6~$ \\ \hline
  $\cdots$ & $\cdots$ & $\cdots$ & $\cdots$ & $\cdots$ \\ \hline
  $~15~$ & \multicolumn{2}{c||}{$8/33$} & \multicolumn{2}{c|}{$8/33$} \\ \hline
  \end{tabular}
  \caption{This table shows lower and upper bounds for $\mu_{2,k}$ and $\mu_{2,k}^{\otimes,\text{LT}}$ for different numbers of measurements $k$ on a two-qubit system. The starred value is obtained partially via numerical two-parameter optimization.}
  \label{tbl:2qusum}
\end{table}
\renewcommand{\arraystretch}{1}
Surprisingly, we can compute the volume ratio for given locally traceless observables explicitly. 
\begin{theorem}\label{thm:222sep}
For locally traceless two-qubit product observables $A_1$ and $A_2$ written in the standard form as in Eqs.~(\ref{eq:ltobs1},\,\ref{eq:ltobs2}), the volume ratio is 
\begin{equation}
  \begin{aligned}
    \frac{\vol L_\Sep}{\vol L} = &\frac{\pi}{8} \left[ \left( \left|\sin\theta_-\right|+\left|\sin\theta_+\right| \right) - \frac{ \tilde{F}(\theta_-,\theta_+)}{\tilde{T}(\theta_-,\theta_+)} \right] /  \\
                &\left[ \cos\theta_- - \cos\theta_+ + G_-(\theta_-) + G_+(\theta_+) \right],
  \end{aligned}
\end{equation}
where $\theta_\pm = \theta_A \pm \theta_B$ as well as
\begin{align}
  G_-(\theta_-) &= |\frac{\theta_-}{2} \sin\theta_-|, \\
  G_+(\theta_+) &= |\left( \frac{\theta_+}{2}-\frac{\pi}{2} \right) \sin\theta_+|, \\
  \tilde{F}(\theta_-,\theta_+) &= F(\theta_-,\theta_+) - F(\theta_+,\theta_-), \\
  \tilde{T}(\theta_-,\theta_+) &= T(\theta_-,\theta_+) - T(\theta_+,\theta_-),
\end{align}
and the functions $F$ and $T$ are given by
\begin{align}
  F(x,y) &= \left| \sin\frac{x}{2}\cos\frac{y}{2} \right| \left[ K(1-T^2) - E(1-T^2) \right], \\
  T(x,y) &= \left| \frac{\tan\frac{x}{2}}{\tan\frac{y}{2}} \right|,
\end{align}
where $K(\cdot)$ and $E(\cdot)$ are the elliptic integrals of the first and second kind, respectively.
\end{theorem}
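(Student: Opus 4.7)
The plan is to identify the geometric shape of both $L$ and $L_\Sep$ directly from the product structure of $A_1, A_2$, as illustrated in Fig.~\ref{fig:prod_obs}, and then reduce the area computation to a pair of explicit ellipse-geometry problems whose ratio yields the stated formula.

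For the standard numerical range, I would simultaneously block-diagonalize $A_1$ and $A_2$ in the Bell basis. A direct computation shows that the pair $(A_1, A_2)$ decouples into two commuting $2\times 2$ blocks: one on $\mathrm{span}\{|\phi^+\rangle, |\psi^-\rangle\}$ with effective operators $(\sigma_z,\, \cos\theta_-\sigma_z + \sin\theta_-\sigma_x)$, and one on $\mathrm{span}\{|\phi^-\rangle, |\psi^+\rangle\}$ with $(-\sigma_z,\, -\cos\theta_+\sigma_z + \sin\theta_+\sigma_x)$. The joint numerical range of each block is an elliptical disk centered at the origin of area $\pi|\sin\theta_\pm|$, and $L$ is the convex hull of the two disks closed off by their common external tangent segments. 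Standard trigonometry then attributes the elementary cap contributions to $G_\pm(\theta_\pm)$.

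For the separable numerical range, I would parametrize pure product states by the Bloch angles $\alpha, \beta$ of the two qubits in the $xz$-plane (the $y$-components do not affect $\langle A_j\rangle$), giving $\langle A_1\rangle = \cos\alpha\cos\beta$ and $\langle A_2\rangle = \cos(\alpha-\theta_A)\cos(\beta-\theta_B)$. The product-to-sum identity then splits $(\langle A_1\rangle, \langle A_2\rangle)$ into a piece depending only on $s = \alpha+\beta$ and $\theta_+$, plus a piece depending only on $d = \alpha-\beta$ and $\theta_-$. Since $s$ and $d$ can be varied independently as angular variables, each piece traces out an ellipse, and the product-state image equals the sum of the two ellipse boundaries. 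Closure under convex combinations, together with the identity $\mathrm{conv}(A+B) = \mathrm{conv}(A) + \mathrm{conv}(B)$, identifies $L_\Sep$ exactly with the Minkowski sum of the two filled ellipses.

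With both shapes fixed, the two areas can be evaluated separately. The area of the convex hull of the two centered ellipses reduces, after elementary trigonometry, to the denominator of the quoted ratio. The area of the Minkowski sum equals $\vol(E_-) + \vol(E_+) + 2 V(E_-, E_+)$, and when the two ellipses are not coaxial the mixed area $V(E_-, E_+) = \tfrac{1}{2}\int_0^{2\pi} h_{E_-}(\phi)\, h_{E_+}'(\phi)\, d\phi$ reduces, after matching common outward normal directions on the two ellipses, to a combination of the complete elliptic integrals $K$ and $E$ of modulus $1 - T^2$, which is precisely the origin of the functions $F$ and $T$. Combining the two areas yields the claimed formula with its overall factor of $\pi/8$. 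The principal obstacle I foresee is this last closed-form evaluation of the mixed area: expressing the boundary-matching integral in terms of $K(1-T^2)$ and $E(1-T^2)$ requires a careful reparametrization by common tangent direction and a nontrivial reduction of the resulting integral, and one must verify that the elliptic modulus really comes out to $1 - T^2$. A secondary technical difficulty is the bookkeeping of signs and branch choices as $\theta_A, \theta_B$ vary over $[0, \pi]$, which is what produces the absolute values and the antisymmetrized combinations $\tilde{F}, \tilde{T}$ in the final expression rather than $F, T$ themselves.
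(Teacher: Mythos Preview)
Your plan is correct and reaches the same geometric picture as the paper: $L$ is the convex hull of two ellipses and $L_{\Sep}$ is the Minkowski sum of two ellipses, with $\theta_\pm$ controlling each pair. The identification of $L_{\Sep}$ via the product-to-sum split in $s=\alpha+\beta$ and $d=\alpha-\beta$ is essentially the paper's argument, just phrased without the preliminary $\pi/4$ rotation the paper uses to make the ellipses axis-aligned.

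Where you genuinely diverge is in two places. For $L$, the paper does not block-diagonalize in the Bell basis; it instead computes the Kippenhahn generating curve $\det(uA_1+vA_2+w\one)=0$, reduces it with a Gr\"obner basis, factors the result into two conics after the $\pi/4$ rotation, and then separately exhibits explicit families of states tracing out each ellipse. Your Bell-basis route is shorter and more transparent: once $A_1,A_2$ are simultaneously block-diagonal, $L$ is automatically the convex hull of the two $2\times 2$ qubit numerical ranges, and no algebraic-geometry machinery or state construction is needed. For the area of $L_{\Sep}$, the paper does not invoke the mixed-area formula $\vol(E_-+E_+)=\vol(E_-)+\vol(E_+)+2V(E_-,E_+)$; instead it fixes one coordinate, solves the constrained optimization for the boundary, and pushes through a substitution $u=\tan z_-$ that reduces the integral directly to $K$ and $E$ with the stated modulus. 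Your mixed-area route explains structurally \emph{why} elliptic integrals appear (they are the generic closed form for the mixed area of two non-coaxial ellipses), whereas the paper's direct integration is more mechanical but makes the appearance of $1-T^2$ as the modulus fall out of an explicit change of variables rather than requiring the support-function reparametrization you flag as the hard step. Either computation suffices; your version is cleaner conceptually, the paper's is more self-contained for a reader who does not know the mixed-area identity.
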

Indeed, we prove that the separable numerical range is given by the Minkowski sum of two ellipses, while the standard numerical range is the convex hull of two other ellipses; see Fig.~\ref{fig:prod_obs}.
The proof is shown in Appendix~\ref{app:222sep}, where we compute both shapes and volumes explicitly.
Using this result, numerical two-parameter optimization shows that $\mu_{2,2}^{\otimes,\text{LT}} = \frac{1}{2}$.

We also compute the volume ratio for certain instances of more than two measurements.
More precisely, the observables $X\otimes X$, $X\otimes Y$, and $Z\otimes Z$ yield a ratio of $1/3$, adding also the observable $Y\otimes Z$ yields a ratio of $1/6$; see Appendix~\ref{app:2quintances} for details.

In Table~\ref{tbl:2qusum}, we summarize our results bounding the minimal volume ratios $\mu_{2,k}$ and $\mu_{2,k}^{\otimes,\text{LT}}$. The upper bounds for $\mu_{2,1}$ and $\mu_{2,2}$ are presumably tight.

\section{Average cases}

To compare our findings with the \emph{generic} case, one may ask the following question: For Hermitian operators $X_1,\dots,X_{k}$ acting on a bipartite $d\times d$ system, drawn i.i.d. from Gaussian Orthogonal Ensemble (GOE), what is the average volume ratio,
\begin{equation}
 \tau_{d,k} = \left\langle\frac{\vol L_\Sep(X_1,\dots,X_{k})}{\vol L(X_1,\dots,X_{k})} \right\rangle_{X_1,\dots,X_{k} \sim \text{GOE}}.
 \label{eq:meanvolratio}
\end{equation}
\renewcommand{\arraystretch}{1.4}
\begin{table}[t!]
	\centering
	\begin{tabular}{|c|c|c||c|}
		\hline
		~$d^2$~ & ~$\langle |\lambda^\otimes_\text{min}| \rangle_{\text{GOE}}$~ & ~$\langle|\lambda_{\text{min}}|\rangle_{\text{GOE}}$~ & ~$\langle {\lambda^\otimes_{\text{min}}}/{\lambda_{\text{min}}} \rangle_{\text{GOE}}$~ \\[3pt] \hhline{|=|=|=#=|}
		$2^2$ & $1.213(4)$ & $1.364(4)$ & $0.8871(4)$ \\
		\hline
		$3^2$ & $1.3185(3)$ & $1.6542(2)$ & $0.7981(8)$ \\
		\hline
		$4^2$ & $1.2862(1)$ & $1.7738(2)$ & $0.7265(6)$ \\
		\hline
	\end{tabular}
	\caption{Mean absolute values $\langle |\lambda^\otimes_\text{min}| \rangle$ and $\langle|\lambda_{\text{min}}|\rangle$ for GOE matrices of size $d^2$ based on 15000 samples each. The mean ratio $\langle {\lambda^\otimes_{\text{min}}}/{\lambda_{\text{min}}} \rangle$ decreases with dimension --- a sign that the set of separable states becomes increasingly smaller in comparison with the entire set of quantum states.}
	\label{tab:lambdas}
\end{table}
\renewcommand{\arraystretch}{1}

In order to answer this question and to set the bounds derived in this article in a context, we have numerically probed the separable and joint numerical ranges for different numbers of operators $k$ and subsystem dimensions $d$. 
One of the variables related to the volume ratio is the ratio of the \emph{minimal separable expectation value},
\begin{equation}
\lambda^\otimes_{\min}(X) = \min_{\ket{\alpha}, \ket\beta} \braket{\alpha\otimes\beta | X | \alpha\otimes\beta},
\end{equation}
to the minimal eigenvalue $\lambda_{\text{min}}(X)$.
Note that for matrices randomly distributed w.r.t.~the Gaussian Orthogonal Ensemble, the average of this ratio is equal to the average ratio of maximal expectation values due to symmetry.
Both describe the ratio between separable and standard numerical radius of a single bipartite observable $X$.
Although the ratio ${\lambda^\otimes_{\min}}/{\lambda_{\min}}$ is not \emph{exactly} equal to the volume ratio $\vol L_{\Sep}/\vol L$, the joint and separable numerical ranges of matrices drawn from GOE tend to be approximately symmetric implying an approximate relation.
Even for a larger number of measurements, we expect the volume ratio and the ratio of minimal (separable) eigenvalues to be correlated.
In particular, if both numerical ranges were spheres centered at the origin, which indeed is the limiting case for large system sizes $d$, the volume ratio would satisfy
\begin{equation}
\left\langle \frac{\vol L_{\Sep}}{\vol L} \right\rangle_{\text{GOE}} \approx \left\langle \frac{\lambda^\otimes_{\min}}{\lambda_{\min}} \right\rangle _{\text{GOE}}^k.
\label{eqn:goeassumption}
\end{equation}

We numerically compute $\lambda_{\min}$ and $\lambda^\otimes_{\min}$ for two qubits, two qutrits and two ququarts based on $15 000$ samples and for higher dimensions using smaller samples. The numerical estimation of $\lambda^\otimes_{\min}$ is based on the observation that
\begin{equation}
	\label{eqn:lotimesexpr}
	\lambda^\otimes_{\text{min}}(X) = \min_{\ket{\alpha}\in H_d} \lambda_{\text{min}}\left(\text{Tr}_A\left[\ketbra{\alpha}{\alpha}\otimes\mathbb{1}\right]X\right), 
\end{equation}
allowing for minimization of the minimal eigenvalue of a lower dimensional operator with respect to the Hilbert space of a single operator \cite{czartowski2019}. The results are listed in Table~\ref{tab:lambdas} and Appendix~\ref{app:minsepexpval}.

\begin{figure}[t!]
  \centering
  \includegraphics[width=0.9\linewidth]{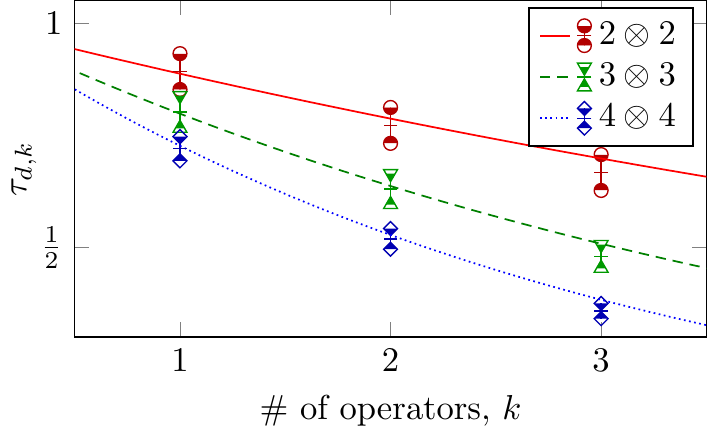}
  \caption{Comparison of the volume ratios $\tau_{d,k}$ for different bipartite $d\times d$ systems and $k$ operators drawn from the Gaussian Orthogonal Ensemble (symbols with error bars) with  the limiting case expression defined in Eq.~\eqref{eqn:goeassumption} represented by the curves. While the expression assumptions are not met --- the numerical ranges are far from being spherical for low local dimensions $d$ --- it is unexpectedly accurate for low $k$.}
  \label{fig:jnr-goe-scaling}
\end{figure}

Computing the volume ratio involves calculating the separable and standard joint numerical range for several input operators.
The volumes of the separable numerical ranges were approximated numerically using an iterative algorithm probing product states which are likely to contribute to the boundary shape. The calculations were run until a sufficient number of points on the boundary had been generated. Due to the challenging complexity, only $\sim 10^2$ volume ratios for every dimension were calculated, the results are shown in Table~\ref{tab:ratios} and compared to the approximation from Eq.~(\ref{eqn:goeassumption}) in Fig.~\ref{fig:jnr-goe-scaling}. 

\renewcommand{\arraystretch}{1.4}
\begin{table}[t!]
	\centering
	\begin{tabular}{|c|c|c|c|c|}
		\hline
		\backslashbox{$k$}{$d^2$} & $2^2$ & $3^2$   & $4^2$ \\
		\hline
		~1~ & ~$0.892(8)$~ & ~$0.802(6)$~ & ~$0.720(5)$~ \\
		\hline
		~2~ & ~$0.772(9)$~ & ~$0.630(6)$~ & ~$0.519(5)$~ \\
		\hline
		~3~ & ~$0.667(9)$~ & ~$0.480(4)$~ & ~$0.3581(3)$~ \\
		\hline
	\end{tabular}
	\caption{Volume ratios $\tau_{d,k}$ as defined in Eq. \eqref{eq:meanvolratio} of separable and standard numerical ranges as a function of the number of operators $k$ drawn from Gaussian Orthogonal Ensemble for different bipartite systems: two qubits, two qutrits, and two ququarts.}
	\label{tab:ratios}
\end{table}
\renewcommand{\arraystretch}{1}

They suggest that generically (in the terms of GOE) the volume ratios do not reach the level of the examples we found to provide upper bounds to $\mu_{d,k}$. Numerically, the asymptotic expression for large system sizes provided by Eq. \eqref{eqn:goeassumption} seems to serve as an upper bound for the average volume ratio for a low number of operators $k$.

\section{Conclusion}
Considering the (separable) numerical range is the most general concept for entanglement detection for given observables.
The volume ratio of the separable numerical range compared to the standard numerical range indicates how difficult it is to verify the presence of entanglement with statistical significance.
More precisely, in an experiment, the measurement data imply a confidence region for the expectation values of the underlying state when the given observables are measured.
To detect entanglement, this confidence region and the separable numerical range have to be disjoint, which is generically more likely for smaller volume ratios.
We provide a general lower bound for any number of particles, local dimensions, and number of measurements and consider extreme cases.
In the case of two qubits and a single measurement, we examine the numerical range generated by absolutely separable states and obtain a lower bound for the volume ratio of $\sqrt{2} - 1$.
Numerical investigations lead us to conjecture that the minimal volume ratio is indeed $1/2$, thus it would be desirable to close this gap for the most basic case in the future.
Furthermore, we focus on product observables which are easier to measure in experiments.
For two locally traceless two-qubit observables, we explicitly provide the volumes for the (separable) numerical range.
Lastly, numerical investigations into generic volume ratios help to put the obtained results into a wider context.

Besides experimental entanglement detection, our results also give insight into the geometry of the set of (separable) quantum states and their relation.
The (separable) numerical range is an affine transformation of a projection of the set of separable or general quantum states, i.e., a lower-dimensional shadow.
This leads to interesting mathematical questions about volume ratios of shadows of convex bodies.
In the future, we hope to extend our results and consider more quantum phenomena such as genuine multipartite entanglement, steering, and nonlocality, and their respective volume ratios.

\begin{acknowledgments}
  We would like to thank Otfried G{\"u}hne, Zbigniew Pucha{\l}a, Nikolai Wyderka, and Xiao-Dong Yu for fruitful discussions.
  This work was supported by the Deutsche Forschungsgemeinschaft (DFG, German Research Foundation, project numbers 447948357 and 440958198),
  the Sino-German Center for Research Promotion (Project M-0294),
  the ERC (Consolidator Grant 683107/TempoQ), 
  and the House of Young Talents Siegen.
  Also, financial support by the Polish National Science Center under the grant numbers
  DEC-2015/18/A/ST2/00274 and 2019/35/O/ST2/01049 and
  by the Foundation for Polish Science under the Team-Net NTQC project is gratefully acknowledged.
\end{acknowledgments}

\vspace{2em}
\onecolumngrid
\appendix

\section{Proof of Lemma~\ref{lem:volint}}\label{app:volint}
\setcounter{definition}{3}
\begin{lemma}
  For Hermitian operators $A_1,\dots,A_k$ and a star-convex state set $X$ around the maximally mixed state, the $k$-dimensional volume of the restricted numerical range is given by
  \begin{equation}
    \begin{aligned}
      \vol L_X = \int_0^{2\pi}d\varphi \int_0^\pi d\vartheta_1 \cdots \int_0^\pi d\vartheta_{k-2} \, \frac{1}{k} R^k \prod_{j=1}^{k-2} \sin^j \vartheta_j,
    \end{aligned}
  \end{equation}
  where the radius $R(\varphi,\vartheta_1,\dots,\vartheta_{k-2})$ is determined by
  \begin{equation}
    \begin{aligned}
      R = \,&\maxover[\rho \in X]  && \trace \rho [\bm{\hat{r}} \cdot \bm{A}] \\
            &\subto && \trace \rho [\bm{\hat{\varphi}} \cdot \bm{A}] = 0, \\
            &       && \trace \rho [\bm{\hat{\vartheta}_j} \cdot \bm{A}] = 0 \text{ for } j=1,\dots,k-2,
    \end{aligned}
  \end{equation}
  and $\bm{A} = (A_1 - \frac{\one}{\trace\one}\trace A_1,\dots,A_k - \frac{\one}{\trace\one}\trace A_k)^T$. The vectors $\bm{\hat{r}},\bm{\hat{\varphi}},\bm{\hat{\vartheta}_1},\dots,\bm{\hat{\vartheta}_{k-2}}$ are the unit vectors of the $k$-dimensional polar coordinates.
\end{lemma}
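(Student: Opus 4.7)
The plan is to reduce to the case where the observables are traceless, identify the translated numerical range as a star-convex set around the origin, and then carry out the standard spherical-coordinate volume computation. Concretely, I would first note that replacing $A_j$ by $A_j - \frac{\one}{\trace\one}\trace A_j$ simply translates the numerical range in $\dR^k$, so the volume is unchanged and we may assume all observables are traceless. Under this assumption the maximally mixed state $\one/D$ maps to the origin of $\dR^k$ since $\trace \frac{\one}{D} A_j = 0$, and because the map $\rho \mapsto (\trace \rho A_1,\ldots,\trace \rho A_k)$ is affine linear, star-convexity of $X$ around $\one/D$ implies star-convexity of $L_X$ around the origin.

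Next I would parametrize $L_X$ using $k$-dimensional spherical coordinates centered at the origin, for which the Jacobian is well known to be $r^{k-1}\prod_{j=1}^{k-2}\sin^j\vartheta_j$. Writing the boundary as a radial function $R(\varphi,\vartheta_1,\ldots,\vartheta_{k-2})$ (which is well defined by star-convexity, since every ray from the origin meets the boundary at a unique point), the radial integral evaluates to $R^k/k$, producing the claimed formula for $\vol L_X$ modulo a characterization of $R$.

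The remaining step is to express $R(\varphi,\vartheta_1,\ldots,\vartheta_{k-2})$ as the stated semi-definite-style optimization. Choose an orthonormal basis $\bm{\hat{r}},\bm{\hat{\varphi}},\bm{\hat{\vartheta}_1},\ldots,\bm{\hat{\vartheta}_{k-2}}$ of $\dR^k$ aligned with the spherical coordinate direction at $(\varphi,\vartheta_1,\ldots,\vartheta_{k-2})$. By linearity of the trace, the coordinates of a point $(\trace\rho A_1,\ldots,\trace\rho A_k)$ in this basis are precisely $\trace\rho(\bm{\hat{r}}\cdot\bm{A})$, $\trace\rho(\bm{\hat{\varphi}}\cdot\bm{A})$, $\trace\rho(\bm{\hat{\vartheta}_j}\cdot\bm{A})$. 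The conditions that the image lies on the ray in direction $\bm{\hat{r}}$ are exactly the stated vanishing conditions on the orthogonal components, and along such a ray the distance to the origin equals the $\bm{\hat{r}}$-coordinate. Hence the supremum of this quantity over admissible $\rho\in X$ is $R$, which together with star-convexity (ensuring the entire segment from $0$ to the maximizer lies in $L_X$) shows that $R$ indeed equals the radial function of the boundary.

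The one subtle point that I would verify carefully is feasibility of the constrained optimization and the fact that the maximum is attained exactly on the boundary: feasibility is immediate because $\rho=\one/D$ satisfies all constraints trivially (giving $r=0$); attainment follows from compactness of $X$ and continuity of the objective. Star-convexity then guarantees that the segment from the origin to the optimizer is entirely in $L_X$, so the optimum coincides with the boundary radius rather than merely an interior value. This is the only place where the star-convexity hypothesis is genuinely used, and I expect it to be the subtlest step of the argument; once it is justified, combining it with the spherical-coordinate volume integral completes the proof.
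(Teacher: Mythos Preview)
Your proposal is correct and follows essentially the same approach as the paper: translate to traceless observables, use star-convexity of $X$ about $\one/D$ to obtain star-convexity of $L_X$ about the origin, and then integrate in $k$-dimensional spherical coordinates with the radial function characterized by the stated constrained optimization. If anything, your write-up is more thorough than the paper's own proof, which is quite terse and does not spell out the feasibility and attainment arguments or the precise reason the orthogonality constraints single out the correct ray.
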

\begin{proof}
  The volume $V$ of a $k$-dimensional region that is star-convex around the origin is given via integration as
  \begin{equation}  
    \begin{aligned}
      V &= \int_0^{2\pi}d\varphi \int_0^\pi d\vartheta_1 \cdots \int_0^\pi d\vartheta_{k-2} \int_0^{R} dr \, r^{k-1} \,\prod_{j=1}^{k-2} \sin^j \vartheta_j \\
        &= \int_0^{2\pi}d\varphi \int_0^\pi d\vartheta_1 \cdots \int_0^\pi d\vartheta_{k-2} \, \frac{1}{k} R^k \,\prod_{j=1}^{k-2} \sin^j \vartheta_j,
    \end{aligned}
  \end{equation}
  using $k$-dimensional polar coordinates.
  Obviously, we have $\vol L(A_1,\dots,A_k) = \vol L(A_1 - \frac{\one}{\trace\one}\trace A_1,\dots,A_k - \frac{\one}{\trace\one}\trace A_k)$ since the transformation merely translates the restricted joint numerical range.
  The restricted numerical range $L_X(A_1 - \frac{\one}{\trace\one}\trace A_1,\dots,A_k - \frac{\one}{\trace\one}\trace A_k)$ is a star-convex set around the origin as $X$ is star-convex around the maximally mixed state.
  Hence, we can calculate its volume using the above geometric formula.
  The radius $R(\varphi,\vartheta_1,\dots,\vartheta_{k-2})$ is apparently given by
  \begin{equation}
    \begin{aligned}
      R(\varphi,\vartheta_1,\dots,\vartheta_{k-2}) = \,&\maxover[\rho \in X]  && \trace \rho [\bm{\hat{r}}(\varphi,\vartheta_1,\dots,\vartheta_{k-2}) \cdot \bm{A}] \\
            &\subto && \trace \rho [\bm{\hat{\varphi}}(\varphi,\vartheta_1,\dots,\vartheta_{k-2}) \cdot \bm{A}] = 0, \\
            &       && \trace \rho [\bm{\hat{\vartheta}_j}(\varphi,\vartheta_1,\dots,\vartheta_{k-2}) \cdot \bm{A}] = 0 \text{ for } j=1,\dots,k-2,
    \end{aligned}
  \end{equation}
  where $\bm{A} = (A_1 - \frac{\one}{\trace\one}\trace A_1,\dots,A_k - \frac{\one}{\trace\one}\trace A_k)^T$ and $\bm{\hat{r}},\bm{\hat{\varphi}},\bm{\hat{\vartheta}_1},\dots,\bm{\hat{\vartheta}_{k-2}}$ are the unit vectors of the $k$-dimensional polar coordinates. The constraints of the optimization make sure that it yields the distance of the boundary to the origin in a certain direction given by the angles $\varphi,\vartheta_1,\dots,\vartheta_{k-2}$.
\end{proof}

\section{Proof of Proposition~\ref{prop:maxset}}\label{app:maxset}
\setcounter{definition}{5}

\begin{proposition}
Let $A_1,\dots,A_{D^2-1}$ be Hermitian observables such that the translated operators $A_1 - \frac{\one}{D}\trace A_1,\dots,A_{D^2-1} - \frac{\one}{D}\trace A_{D^2-1}$ are linearly independent. Then, the relative volume of the (separable) numerical range is given by
\begin{equation}
  \frac{\vol L_\Sep(A_1,\dots,A_{D^2-1})}{\vol L(A_1,\dots,A_{D^2-1})} = \frac{\vol_{HS} SEP}{\vol_{HS} ALL},
\end{equation}
where $\vol_{HS} SEP$ and $\vol_{HS} ALL$ denote the volume of separable and all states, respectively, w.r.t.~the Hilbert-Schmidt norm.
\end{proposition}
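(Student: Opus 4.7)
The plan is to exhibit the measurement map $\rho\mapsto(\trace\rho A_1,\dots,\trace\rho A_{D^2-1})$ as an invertible affine bijection between the affine hyperplane of unit-trace Hermitian operators on $\mathbb{C}^D$ and $\mathbb{R}^{D^2-1}$. Once this is set up, both the image of the full state set and the image of the separable states are just affine copies of the Hilbert–Schmidt bodies, and any common affine transformation leaves the relative volume invariant.

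Concretely, first I would define $B_j=A_j-\frac{\one}{D}\trace A_j$, so that the hypothesis becomes: the $B_j$ are $D^2-1$ linearly independent traceless Hermitian operators. Since the real vector space of traceless Hermitian operators on $\mathbb{C}^D$ has dimension $D^2-1$, the $B_j$ form a basis of it with respect to the Hilbert–Schmidt inner product $\langle X,Y\rangle_{HS}=\trace(XY)$. Next I would write any state as $\rho=\frac{\one}{D}+\tau$ with $\tau$ traceless Hermitian, and observe that the coordinate map $T\colon\tau\mapsto(\trace \tau B_1,\dots,\trace\tau B_{D^2-1})$ is a linear isomorphism from the $(D^2-1)$-dimensional real vector space of traceless Hermitian operators to $\mathbb{R}^{D^2-1}$; this is a standard consequence of $\{B_j\}$ being a basis and the Hilbert–Schmidt product being non-degenerate.

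Then I would note that $\trace\rho A_j=\trace\tau B_j+\frac{1}{D}\trace A_j$, so the full measurement map $\rho\mapsto(\trace\rho A_j)_j$ is the composition of $T$ with the translation by the constant vector $(\frac{1}{D}\trace A_j)_j$. The set of all quantum states is exactly $\{\frac{\one}{D}+\tau\}$ with $\tau$ ranging over the Hilbert–Schmidt ball of traceless Hermitian operators coming from $\rho\ge0$, and $L(A_1,\dots,A_{D^2-1})$ is its affine image. The same holds for $\Sep$ and $L_\Sep$. Hence there exists a single affine map $\Phi(X)=TX+c$ such that
\begin{equation}
L(A_1,\dots,A_{D^2-1})=\Phi(\mathrm{ALL}),\qquad L_\Sep(A_1,\dots,A_{D^2-1})=\Phi(\mathrm{SEP}),
\end{equation}
where both state sets are viewed inside the hyperplane of unit-trace Hermitian operators equipped with the Hilbert–Schmidt metric.

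Finally I would invoke the elementary fact that an invertible affine transformation $\Phi$ scales $(D^2-1)$-dimensional Lebesgue volume by the constant factor $|\det T|$. Applying this to both $\mathrm{ALL}$ and $\mathrm{SEP}$ yields
\begin{equation}
\frac{\vol L_\Sep(A_1,\dots,A_{D^2-1})}{\vol L(A_1,\dots,A_{D^2-1})}=\frac{|\det T|\,\vol_{HS}\mathrm{SEP}}{|\det T|\,\vol_{HS}\mathrm{ALL}}=\frac{\vol_{HS}\mathrm{SEP}}{\vol_{HS}\mathrm{ALL}},
\end{equation}
which is the claim. The only real point of care, and the step I expect to require the cleanest writing, is checking that the affine isomorphism lands the Hilbert–Schmidt volume of the $(D^2-1)$-dimensional state body onto the Euclidean volume of its image in $\mathbb{R}^{D^2-1}$ with one and the same Jacobian $|\det T|$ for both $\mathrm{SEP}$ and $\mathrm{ALL}$; this is however automatic because $\Phi$ does not depend on which subset of states we feed it, so the Jacobian cancels in the ratio.
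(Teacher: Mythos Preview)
Your proposal is correct and follows essentially the same line as the paper's proof: both exhibit the measurement map as an invertible affine map from the hyperplane of unit-trace Hermitian operators to $\mathbb{R}^{D^2-1}$ and then cancel the common Jacobian in the volume ratio. The only presentational difference is that the paper makes the identification of the Hilbert--Schmidt volume with Lebesgue volume explicit by first passing through an orthonormal (Gell-Mann) basis and then relating it to the measurement coordinates via a dual-basis matrix $\Lambda$, whereas you treat the linear isomorphism $T$ more abstractly; both arrive at the same change-of-variables computation.
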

\begin{proof}
We can express any $D$-dimensional quantum state $\rho$ in terms of the generalized Gell-Mann matrices \cite{gellmann}, i.e.,
\begin{equation}
\rho_{\bm{\xi}} = \frac{\one}{D} + \sum_{j=1}^{D^2-1} \xi_j G_j, 
\end{equation}
where the $\xi_j \in \dR$ are real coefficients and the $G_j$ form a Hermitian ($G_j^\dagger = G_j$), orthonormal ($\trace G_i G_j = \delta_{ij}$) and traceless ($\trace G_j = 0$) basis. The distance between two quantum states can be measured using the Hilbert-Schmidt norm $\norm{A}_{HS} = \sqrt{\trace A^\dagger A}$. We obtain
\begin{equation}
  \begin{aligned}
    \norm{\rho_{\bm{\xi}} - \rho_{\bm{\eta}}}_{HS} &= \sqrt{\trace\left[(\rho_{\bm{\xi}} - \rho_{\bm{\eta}})^2\right]} \\
                                                   &= \sqrt{\sum_{i,j=1}^{D^2-1} (\xi_i - \eta_i)(\xi_j - \eta_j) \trace G_i G_j} \\
                                                   &= \sqrt{\sum_{j=1}^{D^2-1} (\xi_j - \eta_j)^2} \\
                                                   &= \norm{\bm{\xi} - \bm{\eta}}
  \end{aligned}
\end{equation}
for the distance between the quantum states $\rho_{\bm{\xi}} = \frac{\one}{D} + \sum_{j=1}^{D^2-1} \xi_j G_j$ and $\rho_{\bm{\eta}} = \frac{\one}{D} + \sum_{j=1}^{D^2-1} \eta_j G_j$. This is the same as the Euclidean distance when we consider the $\xi_j$ and $\eta_j$ as coordinates in $\dR^{D^2-1}$.

Because the $\tilde{A}_j = A_j - \frac{\one}{D}\trace A_j$ are linearly independent and traceless, there exists a dual basis \cite{lebedev2010} $\{B_j\}_{j=1,\dots,D^2-1}$, i.e., $\trace \tilde{A}_i B_j = \delta_{ij}$, and an invertible matrix $\Lambda$ such that $G_i = \sum_j \Lambda_{ij} B_j$. Furthermore, we have that
\begin{equation}
  \begin{aligned}
    \rho_{\bm{\xi}} &= \frac{\one}{D} + \bm{\xi} \cdot \bm{G} \\
                    &= \frac{\one}{D} + \bm{\xi} \cdot \Lambda \bm{B} \\
                    &= \frac{\one}{D} + (\Lambda^T \bm{\xi}) \cdot \bm{B},
  \end{aligned}
\end{equation}
where $\Lambda^T$ is the tranpose of $\Lambda$, which is also invertible.
The coefficients $(\Lambda^T \bm{\xi})_j$ are the coordinates of $\rho$ in the space $L(\tilde{A}_1,\dots,\tilde{A}_{D^2-1})$ because we used the dual basis.

In general, a coordinate transformation $(v_1,\dots,v_k) = \varphi(u_1,\dots,u_k)$ leads to a change of volume integrals
\begin{equation}
  \int_{\bm{v} \in \varphi(U)} f(\bm{v}) d^k v = \int_{\bm{u} \in U} f(\varphi(\bm{u})) |\det(D\varphi)(\bm{u})| d^k u,
\end{equation}
where $\det(D\varphi)$ is the determinant of the Jacobi matrix of $\varphi$.
In this case, $\varphi(\bm{u}) = (\Lambda^T)^{-1} \bm{v}$, which means that
\begin{equation}
  \begin{aligned}
  \vol_{HS} X &= \int_{\rho_{\bm{\xi}} \in X} d^k \xi \\
              &= |\det (\Lambda^T)^{-1}| \int_{\rho_{\bm{\chi}} \in \varphi^{-1}(X)} d^k \chi \\
              &= |\det (\Lambda^T)^{-1}| \vol L_X(A_1,\dots,A_{D^2-1}),
  \end{aligned}
\end{equation}
where $\bm{\chi} = \Lambda^T \bm{\xi}$ and $\varphi^{-1}(X) = \{ \rho_{\Lambda^T \bm{\xi}} ~|~ \rho_{\bm{\xi}} \in X \}$.
Note that $|\det (\Lambda^T)^{-1}| = \const. > 0$ because the transformation is invertible.
Also, since the transformation is linear, it is independent of the variables.
Hence, the relative volume does not change as we apply the same transformation independently of $X$.
\end{proof}

\section{Projection of convex bodies}\label{app:projconvbody}
\begin{figure}[t!]
  \centering
  \begin{subfigure}[t]{0.45\linewidth}
  \centering
    \includegraphics[width=\linewidth]{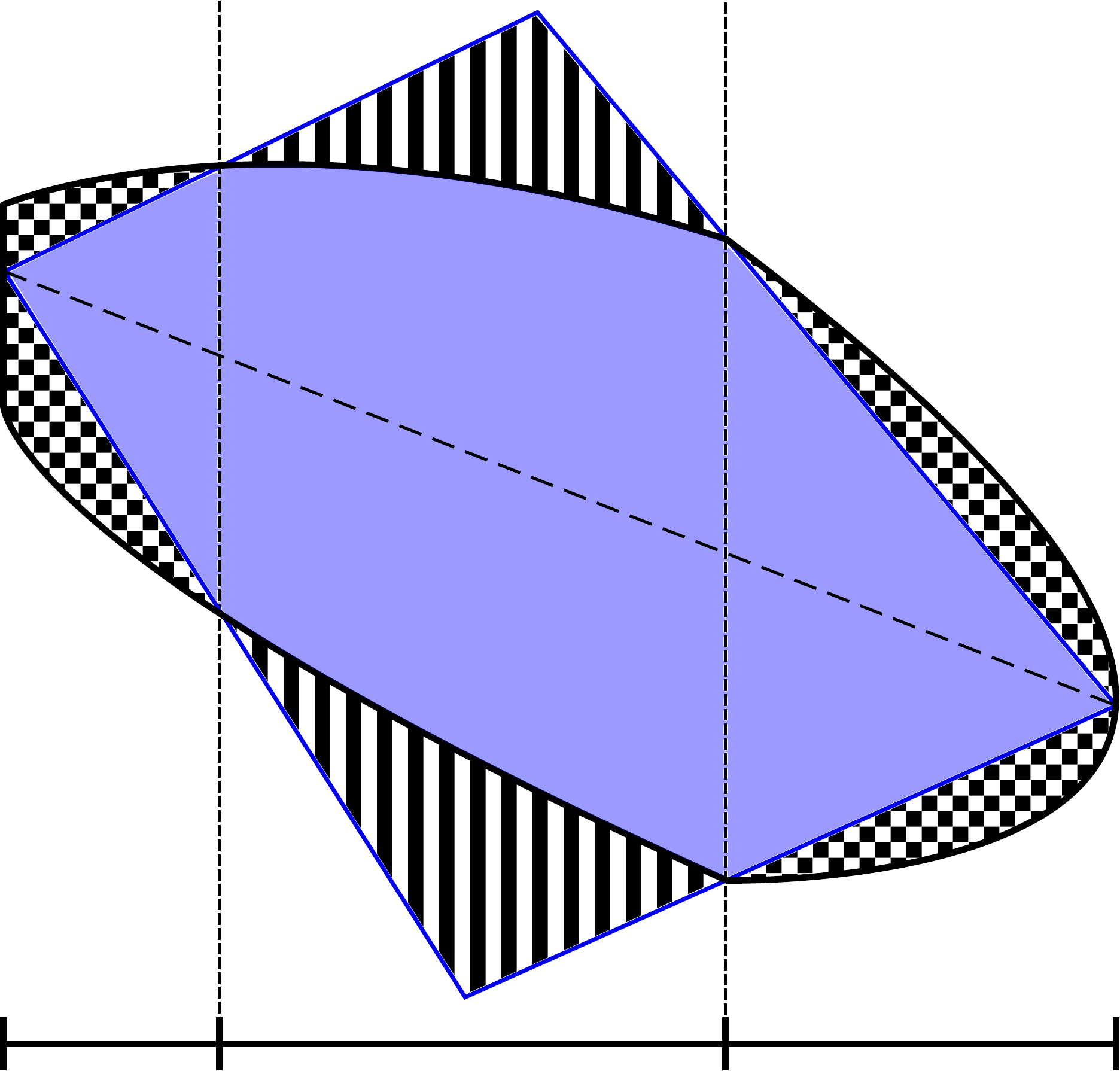}
    \subcaption{To maximize the volume ratio, the best choice for the inner convex body is an infinite rectangle bounded by the outer body. Furthermore, it is shown how the outer convex body can be transformed to a convex quadrangle with larger volume ratio while not changing the volume ratio of the projection by removing checkerboard areas and adding striped areas.}
    \label{fig:proj_bound_d2_a}
  \end{subfigure}
  \hspace*{0.05\linewidth}
  \begin{subfigure}[t]{0.45\linewidth}
  \centering
    \includegraphics[width=\linewidth]{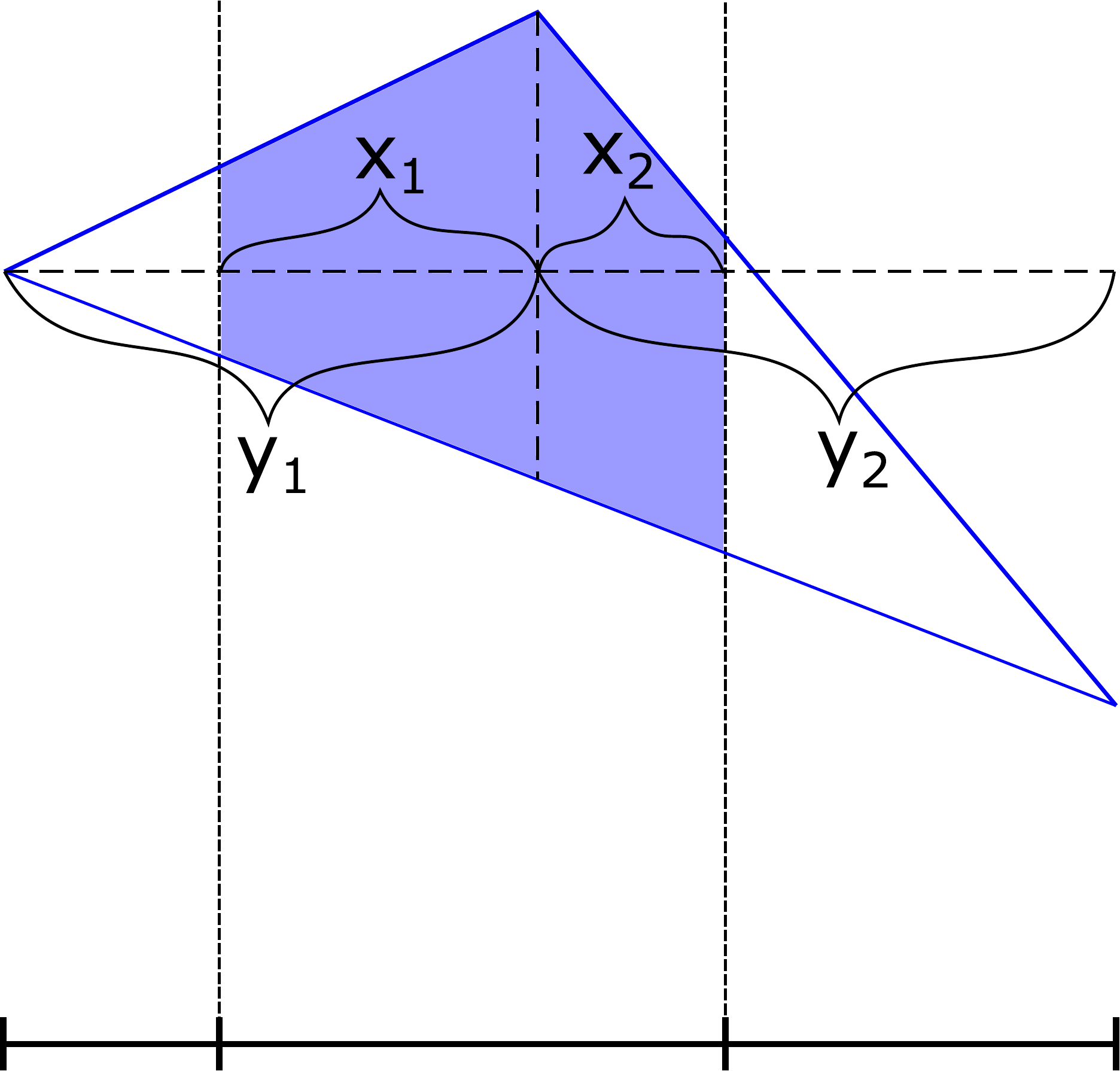}
    \subcaption{The volume ratio of the resulting shape can be optimized for both triangles separately. It is given by $\frac{\vol P_{\bm{u}}K}{\vol P_{\bm{u}}L} = 1 - \frac{(y_1 - x_1)^2}{y_1 y} - \frac{(y_2-x_2)^2}{y_2 y}$, independent from the vertical positions of the corner points, and direct optimization gives $\frac{\vol P_{\bm{u}}K}{\vol P_{\bm{u}}L} = 1-(1-\frac{x}{y})^2$ with $x=x_1+x_2$ and $y=y_1+y_2$ as the largest possible volume ratio for fixed volume ratio $\frac{x}{y}$ of the projections.}
    \label{fig:proj_bound_d2_b}
  \end{subfigure}
  \caption{Illustration of the proof of Proposition~\ref{prop:volratio_n2}.}
  \label{fig:proj_bound_d2}
\end{figure}
Starting from $D^2-1$ observables and removing some of them corresponds to taking projections onto lower dimensional subspaces.
Then, the idea is that the volume ratio of the projections cannot be arbitrarily small compared to the volume ratio of the original convex bodies.
The (very much mathematical) geometric question is: 

Given two convex bodies, i.e.~convex, compact sets, $K,L \subset \dR^k$ such that $K \subset L$ with fixed volume ratio $\frac{\vol K}{\vol L}$. What is the minimal volume ratio of projections $\min_{\bm{u_1},\dots,\bm{u_l}}\frac{\vol P_{\bm{u_1},\dots,\bm{u_l}}K}{\vol P_{\bm{u_1},\dots,\bm{u_l}}L}$, where $P_{\bm{u_1},\dots,\bm{u_l}}$ denotes the projection onto the subspace orthogonal to the vectors $\bm{u_1},\dots,\bm{u_l}$?

The question might be easier when we only consider $l=1$.
The following observation finds the optimal bound in the simplest case, i.e.~$k=2$.
\setcounter{definition}{9}
\begin{proposition}\label{prop:volratio_n2}
For two convex sets $K \subset L \subset \dR^2$ with fixed volume ratio $\frac{\vol K}{\vol L}$, it holds that
\begin{equation}
  \min_{\bm{u}\in\dR^2} \frac{\vol P_{\bm{u}}K}{\vol P_{\bm{u}}L} \ge 1 - \sqrt{1-\frac{\vol K}{\vol L}}.
\end{equation}
This bound is tight.
\end{proposition}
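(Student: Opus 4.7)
The plan is to fix a projection direction $\bm{u}$, reformulate the claim as an upper bound on $\vol K/\vol L$ for fixed projection ratio $r=|P_{\bm{u}}K|/|P_{\bm{u}}L|$, and then carry out a two-step geometric reduction to a convex quadrangle for which the optimum can be computed in closed form. Without loss of generality take $\bm{u}$ horizontal so that $P_{\bm{u}}$ is vertical projection onto the $y$-axis, and set $x=|P_{\bm{u}}K|$, $y=|P_{\bm{u}}L|$; the inequality to prove is $\vol K/\vol L \le 1-(1-x/y)^2$, from which solving for $x/y$ yields the stated bound. As a first reduction, for any convex $K\subset L$ with $|P_{\bm{u}}K|=x$ the horizontal slab $S=\{y_{\min}(K)\le y\le y_{\max}(K)\}$ satisfies $K\subset L\cap S$; since $L\cap S$ is convex, has the same $y$-projection as $K$, and has volume at least $\vol K$, replacing $K$ by $L\cap S$ changes neither $x$ nor $y$ and cannot decrease the ratio, so we may assume $K=L\cap S$.

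\emph{Outer-body reduction.} Next I would simplify $L$ itself to a convex quadrangle. Any convex body with prescribed base segment of length $\ell$ on a line and prescribed vertical extent $h$ on one side of that line has area at least $\tfrac12\,\ell h$, attained by the triangle with that base and a single apex at height $h$ (convex hull bound). Replacing the above-slab portion $L\cap\{y\ge y_{\max}(K)\}$ with such an extremal triangle and doing the same below the slab lowers $\vol L$ without altering $K$ or $y$. A subsequent straightening of the lateral sides of $L$ inside the slab into straight segments joining the new top- and bottom-base endpoints---encoded as the checkerboard-removal/striped-addition operation of Fig.~\ref{fig:proj_bound_d2_a}---makes $L$ a convex quadrangle while only increasing $\vol K/\vol L$, via a balance argument that trades outer checkerboard area against inner striped area.

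\emph{Parametric optimization.} After these reductions, $L$ is a convex quadrangle parameterized by its waist width $w$ and by apex heights $y_1$ above and $y_2$ below the waist, and $K$ is the slab $\{-x_2\le y\le x_1\}\cap L$ with $x_1+x_2=x$, $y_1+y_2=y$, and $0\le x_i\le y_i$. Horizontal cross-sections of $L$ scale linearly---width $w(1-y'/y_1)$ above and $w(1+y'/y_2)$ below the waist, independent of the horizontal apex positions---so direct integration yields
\begin{equation*}
\frac{\vol K}{\vol L}
= 1 - \frac{(y_1-x_1)^2}{y_1\,y} - \frac{(y_2-x_2)^2}{y_2\,y}
= \frac{2x}{y} - \frac{1}{y}\!\left(\frac{x_1^2}{y_1}+\frac{x_2^2}{y_2}\right),
\end{equation*}
in agreement with Fig.~\ref{fig:proj_bound_d2_b} and independent of $w$. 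The Engel form of the Cauchy--Schwarz inequality, $\tfrac{x_1^2}{y_1}+\tfrac{x_2^2}{y_2}\ge\tfrac{(x_1+x_2)^2}{y_1+y_2}=\tfrac{x^2}{y}$ with equality iff $x_1/y_1=x_2/y_2$, then gives $\vol K/\vol L \le 2(x/y)-(x/y)^2 = 1-(1-x/y)^2$. Inverting this and taking the minimum over $\bm{u}$ closes the argument; tightness is attained in the Cauchy--Schwarz equality case, realized for example by a rhombus $L$ with a symmetric horizontal slab $K$.

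The main obstacle will be the outer-body reduction: the slicing step and the one-line Cauchy--Schwarz conclusion are routine, but the two operations that turn $L$ into a quadrangle---capping above and below the slab by minimum-area triangles and then straightening the lateral sides of $L$ inside the slab---must simultaneously preserve convexity of $L$, preserve the projection length $y$, and keep $\vol K/\vol L$ non-decreasing even though both $\vol K$ and $\vol L$ change under the lateral straightening. It is precisely the checkerboard/striped bookkeeping of Fig.~\ref{fig:proj_bound_d2_a} that quantifies this trade-off and collapses the infinite-dimensional extremal problem to the four-parameter family $(y_1,y_2,x_1,x_2)$ optimized above.
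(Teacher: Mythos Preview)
Your approach mirrors the paper's: reformulate as maximizing $\vol K/\vol L$ for fixed projection ratio, replace $K$ by the slab $L\cap S$, reduce $L$ to a convex quadrangle, then optimize the resulting four-parameter family. The Engel form of Cauchy--Schwarz is a clean substitute for the paper's ``direct optimization'' and lands on the same extremum $1-(1-x/y)^2$.

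The gap is in your outer-body reduction. After the triangle-capping step you propose to straighten the lateral sides of $L$ \emph{inside} the slab by joining the top-chord endpoints to the bottom-chord endpoints. But $L$ is convex, so its lateral boundary inside the slab lies \emph{outside} those chords; replacing it by the chords removes the same area from $K=L\cap S$ as from $L$, and subtracting a common positive constant from numerator and denominator of a ratio $\le 1$ \emph{decreases} that ratio, not increases it. Moreover the shape you obtain is a hexagon (triangle + trapezoid + triangle), not the waisted quadrangle your parametric formula presupposes. The paper's operation is different in exactly this point: the lines from the two apex points through the slab corners are \emph{extended} into the slab until the upper and lower lines meet at two new lateral vertices. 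Then the checkerboard removal takes place entirely outside the slab (lowering only $\vol L$), while the striped addition takes place inside the slab (raising $\vol K$ and $\vol L$ by the same amount); both moves push the ratio upward, and the outcome is a genuine quadrangle whose lateral vertices sit inside the slab, matching the four-parameter picture you then optimize.
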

\begin{proof}
The proof is visualized in Fig.~\ref{fig:proj_bound_d2}.
Instead of minimizing the volume ratio of projections for given convex bodies, we consider two projections $PK \subset PL \subset \dR^1$ with given volume ratio and maximize the volume ratio of possible convex bodies $K \subset L$.
That means, we maximize the volume of $K$ while minimizing that of $L$ under the constraints.
Since $PK$ is some line segment, the maximal volume is reached by the infinite rectangle with this projection bounded by $L$ as we have $K \subset L$.
An arbitrary body $L$ can be made smaller without changing $K$ by considering two points that project onto the two end points of $PL$ and their straight connections to the points that lie on the boundary of both $K$ and $L$.
Removing all points from $L$ that are not within the set constrained by these straight lines, makes the volume of $L$ smaller while not changing the volume of $K$, thus increasing the volume ratio.
These points must have been inside $L$ before because of convexity.
Points that lie within the infinite rectangle with projection $PK$ and the area constrained by the straight lines can be added to both $L$ and $K$, increasing both volumes by the same constant, and hence, increasing the volume ratio; see Fig.~\ref{fig:proj_bound_d2_a}.
Thus, we are left with optimizing the position of a quadrangle.
Indeed, it is sufficient to separately optimize the triangles above and below the connecting line of the points that are projected onto the end points of $PL$.
Using the intercept theorem, it is easy to show that the relative volume is independent of the vertical position (compare Fig.~\ref{fig:proj_bound_d2_b}) of the corner points.
Straightforward optimization proofs that the optimal form is a triangle and $PK$ which are symmetric w.r.t.~the center of $PL$.
Then, the volume ratio is given by $\frac{\vol K}{\vol L} = 1 - (1 - \frac{\vol PK}{\vol PL})^2$.
\end{proof}
This proof motivates us to formulate the following conjecture.

\begin{figure}[t!]
  \centering
  \includegraphics[width=0.3\linewidth]{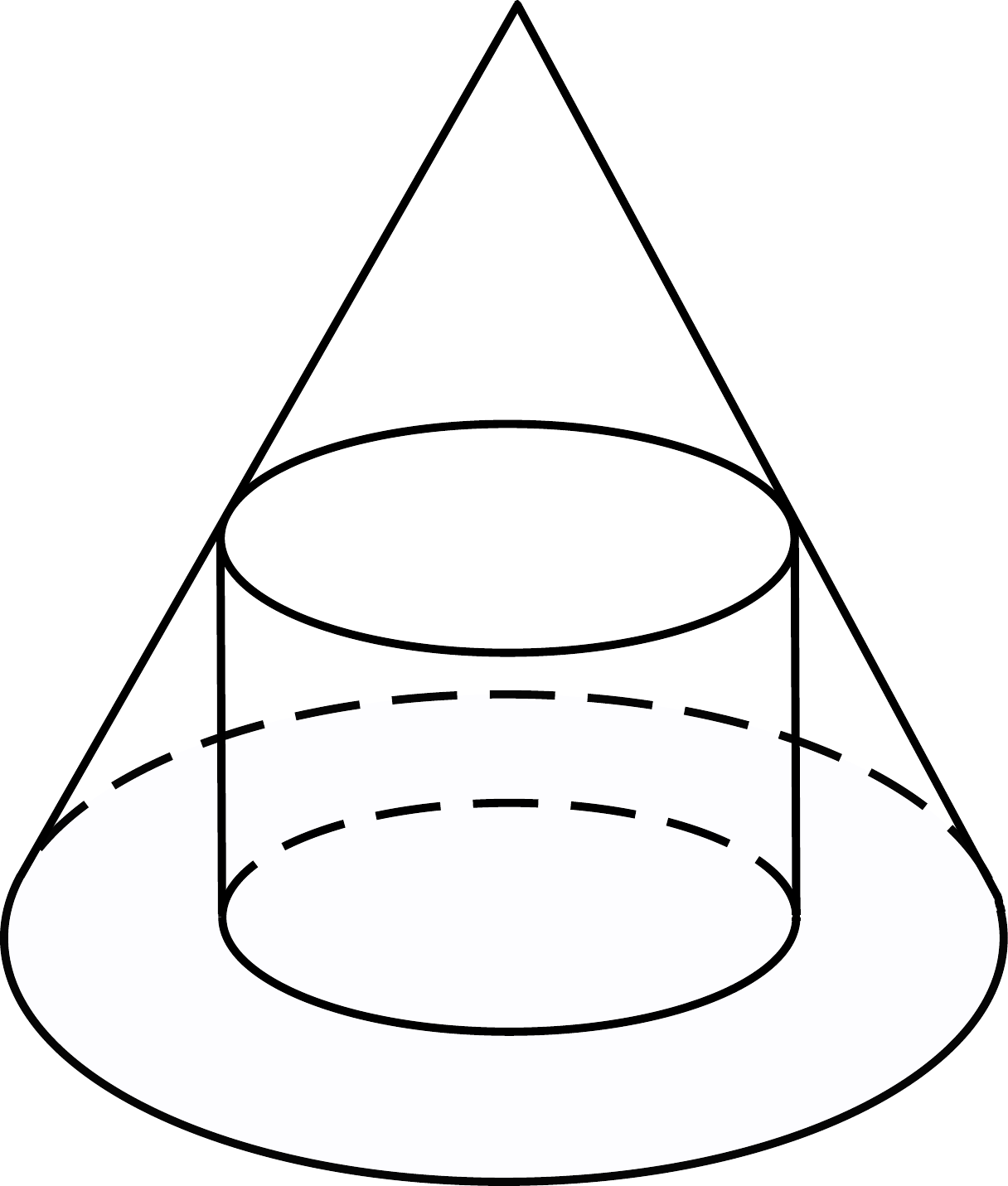}
  \caption{The outer convex body is the large cone while the inner body consists of the cylinder in addition to the small cone on top of the cylinder.
  The base of each object might be a high-dimensional ball.}
  \label{fig:proj_bound}
\end{figure}
\setcounter{definition}{10}
\begin{conjecture}\label{conj:proj_bound}
For two convex sets $K \subset L \subset \dR^k$ with fixed volume ratio $\frac{\vol K}{\vol L}$, it holds that
\begin{equation}
  \min_{\bm{u}\in\dR^k} \frac{\vol P_{\bm{u}}K}{\vol P_{\bm{u}}L} \ge c\left(\frac{\vol K}{\vol L}\right),
\end{equation}
where $0 \le c\left(\frac{\vol K}{\vol L}\right) \le 1$ is the solution to the equation
\begin{equation}
  \frac{\vol K}{\vol L} = c \left[ 1 + (k-1)\left( 1-\sqrt[k-1]{c}\right) \right].
\end{equation}
This bound is tight and can be reached by two concentric $(k-1)$-dimensional balls $PK$ and $PL$ that serve as the base for a cylindrical object and a symmetrically positioned cone, respectively; compare Fig.~\ref{fig:proj_bound}.
\end{conjecture}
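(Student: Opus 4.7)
The plan is to determine the extremal configuration by a chain of reductions, each tightening the class of $(K,L)$ that must be considered, and then to verify by direct calculation that the claimed configuration saturates the bound. First, for any compatible pair $(K,L)$ with projections $PK \subset PL$, replacing $K$ by $K' = L \cap P_{\bm u}^{-1}(PK)$ preserves convexity and the projection $PK$, while only enlarging $K$; hence $\vol K'/\vol L \ge \vol K/\vol L$, so it suffices to maximize the ratio over the subfamily in which $K = L \cap P_{\bm u}^{-1}(PK)$. Writing $h_L(y)$ for the length of the fiber of $L$ above $y \in PL$, convexity of $L$ makes $h_L$ a nonnegative concave function on $PL$, and
\begin{equation}
  \frac{\vol K}{\vol L} = \frac{\int_{PK} h_L(y)\,dy}{\int_{PL} h_L(y)\,dy}.
\end{equation}

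The next step is to maximize this linear-fractional functional over the convex cone of nonnegative concave functions $h$ on $PL$. Scaling out the overall normalization turns this into a linear maximization over a compact convex set of concave functions. The plan is to use a Choquet-type representation, expressing any such $h$ as an integral of cone functions of the form $h^{y_0}(y) = H \cdot \max\{0,\,1 - \mu_{PL,y_0}(y-y_0)\}$, where $\mu_{PL,y_0}$ is the Minkowski functional of $PL$ with respect to an apex $y_0 \in PL$; a linear functional on such a compact convex set attains its extremum at an extreme ray, and the extreme rays should be exactly the cone functions. Geometrically, this amounts to replacing the extremal $L$ by a genuine $k$-dimensional cone with base $PL$ and apex $(y_0, H)$. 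Establishing this extremality rigorously is the main technical obstacle.

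The third reduction is to the case in which $PK$ and $PL$ are concentric Euclidean $(k-1)$-balls with $y_0$ at the common center. The approach is an iterated Steiner symmetrization in $\dR^{k-1}$: symmetrizing $PL$ and $PK$ simultaneously about a hyperplane $\pi \subset \dR^{k-1}$ preserves their $(k-1)$-volumes and the inclusion $PK \subset PL$, and by a rearrangement argument on the cone-height function along each fiber perpendicular to $\pi$, it cannot decrease the resulting body ratio. Iterating over a dense family of hyperplanes and passing to the Hausdorff limit produces concentric balls, and a further symmetry argument pins the apex to the common center. Verifying monotonicity of the body ratio under this simultaneous symmetrization is a second delicate step.

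Finally, in the rotationally symmetric configuration with radii $r < R$ (so $c = (r/R)^{k-1}$) and cone height $H$, integration in spherical coordinates gives $\vol L = H\,\vol PL / k$ and
\begin{equation}
  \vol K = H \omega_{k-1} \Bigl[ r^{k-1} - \tfrac{k-1}{kR}\, r^k \Bigr],
\end{equation}
where $\omega_{k-1}$ is the volume of the unit $(k-1)$-ball. Their ratio simplifies to $c\bigl[k - (k-1)c^{1/(k-1)}\bigr] = c\bigl[1 + (k-1)(1 - c^{1/(k-1)})\bigr]$, which is exactly the defining equation for $c(\vol K/\vol L)$ stated in the conjecture. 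The hard parts are the Choquet-type extremality of cone functions (step two) and the joint Steiner symmetrization of $PK$ and $PL$ together with monotonicity of the cone ratio under it (step three); once these are in hand, the final integral computation and the recovery of Proposition~\ref{prop:volratio_n2} as the $k=2$ special case are routine.
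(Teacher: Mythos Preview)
This statement is presented in the paper as a \emph{conjecture}, not a theorem; the authors give no proof of the general lower bound. What the paper does establish is the $k=2$ case (Proposition~\ref{prop:volratio_n2}), by a direct geometric trimming argument, and it describes the cone--cylinder configuration of Fig.~\ref{fig:proj_bound} as the conjectured extremizer. Your final integration correctly verifies that this configuration attains the value $c\bigl[1+(k-1)(1-c^{1/(k-1)})\bigr]$, so the tightness direction is fine and agrees with the paper.

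Your outline therefore attempts more than the paper claims. The first reduction---replacing $K$ by $L\cap P_{\bm u}^{-1}(PK)$---is correct and is the higher-dimensional analogue of the first move in the paper's proof of Proposition~\ref{prop:volratio_n2}. The substantive difficulty is step two, and you rightly flag it. In dimension $k-1=1$ the Choquet idea can be made rigorous: a nonnegative concave function on an interval vanishing at the endpoints is a positive superposition of triangle functions (this is just the Green's function representation for $-d^2/dy^2$ with Dirichlet data), and the ratio $\int_{PK}h\big/\int_{PL}h$ then becomes a weighted average of tent ratios, hence is maximized at a single tent. That actually yields an alternative proof of Proposition~\ref{prop:volratio_n2}. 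For $k-1\ge 2$, however, the assertion that the extreme rays of the cone of nonnegative concave functions on $PL$ are exactly the tent functions $h^{y_0}$ is not a standard fact, and you offer no argument for it; at a minimum one must also account for ``boundary'' extreme rays (nonnegative affine functions vanishing on a face of $PL$, the analogues of $1\pm y$ in the interval case) and show they do not beat the interior tents. Even if the extreme-ray characterization holds, establishing it is a problem of independent difficulty. Step three is likewise incomplete: Steiner-symmetrizing $PK$ and $PL$ preserves their $(k-1)$-volumes, but it does not carry a cone over $PL$ with apex $y_0$ to a cone over the symmetrized base, so the claimed monotonicity of the body ratio under simultaneous symmetrization is not an immediate consequence of any standard rearrangement inequality. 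These two gaps are exactly what keep the statement a conjecture in the paper; your proposal is a reasonable plan of attack, but as written it does not close them.
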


Furthermore, we obtain a bound independent of $k$ by taking the limit $k \rightarrow \infty$, since
\begin{equation}
  \begin{aligned}
    \frac{\partial}{\partial n} c \left[ 1 + (k-1)\left( 1-\sqrt[k-1]{c}\right) \right] &\ge 0, \\
    \frac{\partial}{\partial c} c \left[ 1 + (k-1)\left( 1-\sqrt[k-1]{c}\right) \right] &\ge 0.
  \end{aligned}
\end{equation}
Then, we have that
\begin{equation}
  \frac{\vol K}{\vol L} = c \left( 1 - \log c \right)
\end{equation}
This configuration, however, is most likely suboptimal when $l > 1$.
Thus, better bounds could be obtained by considering the general geometrical question.

\section{Proof of Proposition~\ref{prop:abssepbound}}\label{app:abssepbound}
\setcounter{definition}{6}
\begin{proposition}
It holds that $\mu_{2,1} \ge \sqrt{2} - 1 \approx 0.41$. Moreover, this is the best bound achievable when only absolutely separable states are considered.
\end{proposition}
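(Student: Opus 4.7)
The plan is to exploit the fact that the set $\Sigma$ of \emph{absolutely separable} two-qubit states is characterised entirely by the spectrum through the Verstraete--Audenaert--De Moor inequality $\lambda_1\le\lambda_3+2\sqrt{\lambda_2\lambda_4}$, where $\lambda_1\ge\lambda_2\ge\lambda_3\ge\lambda_4$ are the sorted eigenvalues of $\rho$. Since $\Sigma\subseteq\Sep$, one has $L_\Sigma(A)\subseteq L_{\Sep}(A)$ for every single observable $A$, so it suffices to lower bound $\vol L_\Sigma(A)/\vol L(A)$ over all $A$. Translating and rescaling $A$ (which leaves the ratio invariant) I may assume its sorted eigenvalues satisfy $\alpha_1=1$ and $\alpha_4=0$, so that $\vol L(A)=1$.

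The key observation is that for any fixed AS spectrum $\lambda^\star$ and any observable $A$ with eigenvectors $\ket{a_i}$, the two states
\begin{equation}
\rho_+=\sum_i \lambda^\star_i\,\ketbra{a_i}{a_i},\qquad \rho_-=\sum_i \lambda^\star_{5-i}\,\ketbra{a_i}{a_i}
\end{equation}
both have spectrum $\lambda^\star$ and hence both lie in $\Sigma$. By the equality case of von Neumann's trace inequality they realise the maximum and minimum of $\trace(A\rho)$ over the unitary orbit of $\lambda^\star$, so that
\begin{equation}
\vol L_\Sigma(A)\ge \trace(A\rho_+)-\trace(A\rho_-)=(\alpha_1-\alpha_4)(\lambda^\star_1-\lambda^\star_4)+(\alpha_2-\alpha_3)(\lambda^\star_2-\lambda^\star_3).
\end{equation}
Both factors of the second term are non-negative, so by choosing $\lambda^\star$ with $\lambda^\star_2=\lambda^\star_3$ I cancel it uniformly in $A$, leaving the $A$-independent lower bound $\lambda^\star_1-\lambda^\star_4$.

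Next I maximise $\lambda_1-\lambda_4$ over AS spectra satisfying $\lambda_2=\lambda_3$. On the tight AS boundary $\lambda_1=\lambda_3+2\sqrt{\lambda_2\lambda_4}$ the substitutions $s=\sqrt{\lambda_2}$, $t=\sqrt{\lambda_4}$ convert the normalisation $\sum_i \lambda_i=1$ into $3s^2+2st+t^2=1$ and the objective $\lambda_1-\lambda_4$ into $s^2+2st-t^2$. The Lagrange stationarity condition yields the quadratic $t^2+2st-s^2=0$, whose positive root is $t=(\sqrt 2-1)s$; feeding this back gives $s=\tfrac12$, $t=\tfrac{\sqrt 2-1}{2}$, and objective value $\sqrt 2-1$. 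The sort and positivity constraints are verified by direct substitution. Combined with the previous paragraph this proves $\vol L_\Sigma(A)\ge\sqrt 2-1$ for every normalised $A$, whence $\mu_{2,1}\ge\sqrt 2-1$.

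For the tightness assertion I would take the observable with spectrum $(1,\tfrac12,\tfrac12,0)$. Since then $\alpha_2=\alpha_3$, the middle term in the bracket above vanishes for \emph{every} choice of $\lambda^\star$, and von Neumann's inequality gives the exact identity $\vol L_\Sigma(A)=\max_{\lambda\in\Delta_{AS}}(\lambda_1-\lambda_4)$. The principal remaining obstacle is to verify that this unconstrained maximum over $\Delta_{AS}$ is exactly $\sqrt 2-1$, i.e.\ that no other branch of the sort simplex beats the $\lambda_2=\lambda_3$ branch. I would dispatch this by running the KKT analysis on the remaining active-face scenarios ($\lambda_1=\lambda_2$, $\lambda_3=\lambda_4$, and the fully interior critical point which forces $\lambda_2=\lambda_3=\lambda_4$); direct calculation in each case yields only $(\sqrt 3-1)/2$ or $1/3$, both strictly below $\sqrt 2-1$. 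This confirms $\vol L_\Sigma(A)=(\sqrt 2-1)\vol L(A)$ and shows the bound cannot be sharpened within the AS framework.
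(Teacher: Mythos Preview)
Your argument is correct and follows the same overall strategy as the paper: bound $L_{\Sep}$ from below by the absolutely separable range using states diagonal in $A$'s eigenbasis, then verify tightness at the observable with spectrum $(1,\tfrac12,\tfrac12,0)$ via a case analysis on the Verstraete--Audenaert--De Moor simplex. Where you differ is in the lower-bound step. The paper constructs an $A$-dependent family of AS spectra (parametrised by $\delta=1-a-b$), computes the resulting gap $-1+\tfrac{2}{\gamma}(2+\delta^2)$ explicitly, and then minimises over $\delta$; you instead fix a \emph{single} AS spectrum with $\lambda_2^\star=\lambda_3^\star$, so that the cross term $(\alpha_2-\alpha_3)(\lambda_2^\star-\lambda_3^\star)$ vanishes and the bound $\lambda_1^\star-\lambda_4^\star$ is $A$-independent from the outset. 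This is a genuine streamlining---the paper's parametric family collapses at its minimum $\delta=0$ to exactly the spectrum you chose, so its extra generality buys nothing for $\mu_{2,1}$. For the tightness half the two proofs are essentially identical: the paper's majorisation reduction to $\max_\lambda\bigl(\lambda_1+\tfrac{\lambda_2+\lambda_3}{2}\bigr)$ is your von Neumann identity $\vol L_\Sigma(A)=\max_\lambda(\lambda_1-\lambda_4)$ up to the affine shift $\tfrac12(1+\cdot)$, and both arguments check the same active-face scenarios $\lambda_1=\lambda_2$, $\lambda_2=\lambda_3$, $\lambda_3=\lambda_4$ (plus the degenerate $\lambda_4=0$ case), obtaining the same numerical outcomes $(\sqrt3-1)/2$ and $1/3$ on the non-optimal branches.
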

\begin{proof}
  Let us translate and scale the observable $A$ such that its smallest and largest eigenvalues are $0$ and $1$, respectively.
  We denote by $a$ and $b$ the two other eigenvalues such that $0 \le a \le b \le 1$ and we have that
  \begin{equation}
    A(a,b) = \ket{\psi_1}\ket{\psi_1} + b \ket{\psi_b}\bra{\psi_b} + a \ket{\psi_a}\bra{\psi_a} + 0 \ket{\psi_0}\bra{\psi_0}.
  \end{equation}
  Now, let $\rho$ be the quantum state
  \begin{equation}
    \rho = \lambda_1 \ket{\psi_1}\ket{\psi_1} + \lambda_2 \ket{\psi_b}\bra{\psi_b} + \lambda_3 \ket{\psi_a}\bra{\psi_a} + \lambda_4 \ket{\psi_0}\bra{\psi_0},
  \end{equation}
  where $\lambda_1 \ge \lambda_2 \ge \lambda_3 \ge \lambda_4 = 1 - \lambda_1 - \lambda_2 - \lambda_3 \ge 0$.
  This state is absolutely separable if and only if $(\lambda_1 - \lambda_3)^2 \le 4 \lambda_2 \lambda_4$ \cite{verstraete2001}.
  More specifically, we consider
  \begin{align}
    \lambda_1(a,b) &=  \frac{1}{4} \left( \frac{8 + \delta(a,b)}{\gamma(a,b)} - 1 \right), \\
    \lambda_2(a,b) = \lambda_3(a,b) &= \frac{1}{4} \left( 1 - \frac{\delta(a,b)}{\gamma(a,b)} \right), \\
    \lambda_4(a,b) = 1 - \lambda_1(a,b) - \lambda_2(a,b) - \lambda_3(a,b) &=  \frac{1}{4} \left( 3 - \frac{8 - \delta(a,b)}{\gamma(a,b)} \right), \\
  \end{align}
  where $\delta(a,b) = 1 - a - b$ and $\gamma(a,b) = \sqrt{8 + \delta^2(a,b)}$.
  From $0 \le a,b \le 1$ and $\gamma \le 3$, it follows that, indeed, $\lambda_1 \ge \lambda_2 = \lambda_3 \ge \lambda_4$.
  Furthermore, $\lambda_4 \ge 0$ is equivalent to $3 \gamma \ge 8 - \delta$ which holds due to $8 - \delta \ge 0$ and $(3\gamma)^2 - (8 - \delta)^2 = 8 (\delta + 1)^2 \ge 0$.
  Finally, we have that $(\lambda_1 - \lambda_3)^2 = 4 \lambda_2 \lambda_4$ and hence, $\rho(a,b)$ is an absolutely separable state for any $a,b$.
  Additionally, let us consider the state $\sigma$ with
  \begin{equation}
    \sigma = \nu_4 \ket{\psi_1}\ket{\psi_1} + \nu_3 \ket{\psi_b}\bra{\psi_b} + \nu_2 \ket{\psi_a}\bra{\psi_a} + \nu_1 \ket{\psi_0}\bra{\psi_0},
  \end{equation}
  where $\nu_j = \lambda_j(1-b,1-a)$.
  Since $\rho$ is an absolutely separable state for any $a,b$, so is $\sigma$.
  For the distance between their respective expectation values when measuring the observable $A$, we obtain
  \begin{equation}
    \trace A\rho - \trace A\sigma = - 1 + \frac{2}{\gamma} \left( 2 + \delta^2 \right),
  \end{equation}
  whose minimum $\sqrt{2} - 1$ is obtained at $\delta = 0$ and implies the bound $\mu_{2,1} \ge \sqrt{2} - 1$.
  
  To prove that this is the best we can do with absolutely separable states, we consider $A(\frac{1}{2},\frac{1}{2})$.
  It is well known that for a state $\rho$, there exists a von Neumann measurement with probability vector $p_j = \braket{a_j|\rho|a_j}$ for some basis $\ket{a_j}$ if, and only if, $\bm{p} \prec \bm{\lambda}$, i.e., the probability vector is majorized by the eigenvalue vector \cite{bengtsson2017}.
  A vector $\bm{b}$ is termed to majorize another vector $\bm{a}$, denoted by $\bm{b} \succ \bm{a}$, if $\sum_{j=1}^l b_j^\downarrow \ge \sum_{j=1}^l a_j^\downarrow$ for $l = 1, \dots, D$ with $D$ being the length of the vectors, where $\bm{b}^\downarrow$ and $\bm{a}^\downarrow$ are the same vectors as $\bm{b}$ and $\bm{a}$ but with components sorted in descending order, respectively.
    Also, note that for absolutely separable states, the eigenvectors of $A$ are irrelevant.
  An easy way to see this is that if an absolutely separable state $\rho$ realizes a certain probability vector $p_j = \braket{a_j|\rho|a_j}$, then the likewise absolutely separable state $U \rho U^\dagger$ realizes the same probability vector for the basis $\ket{\psi_j} = U \ket{a_j}$.
  This is possible because the separability only depends on the spectrum of the density matrix, not on the eigenvectors.
  Thus, the maximal expectation value with $A(\frac{1}{2},\frac{1}{2})$ for an absolutely separable state is given by the following optimization
  \begin{equation}
    \begin{aligned}
      \,&\maxover[\bm{p},\bm{\lambda}]  && p_1 + \frac{p_2+p_3}{2} \\
          &\subto && p_1 \le \lambda_1, \quad p_1 + p_2 \le \lambda_1 + \lambda_2, \quad p_1 + p_2 + p_3 \le \lambda_1 + \lambda_2 + \lambda_3, \\
          &       && p_1 \ge p_2 \ge p_3 \ge 1 - p_1 - p_2 - p_3 \ge 0, \\
          &       && \lambda_1 \ge \lambda_2 \ge \lambda_3 \ge 1 - \lambda_1 - \lambda_2 - \lambda_3 \ge 0, \\
          &       && (\lambda_1 - \lambda_3)^2 \le 4 \lambda_2 (1 - \lambda_1 - \lambda_2 - \lambda_3). \\
    \end{aligned}
  \end{equation}
  Clearly, for a given feasible point, we can increase $p_3$ while decreasing $p_2$ such that $p_2 + p_3 = \const$ since it leaves the objective value invariant and only relaxes the second constraint.
  Thus, there exists an optimal solution with $p_2 = p_3$.
  Now increasing $p_1$ while decreasing $p_2 = p_3$ such that $p_1 + p_2 = \const.$ leads to a relaxation of the constraints, and hence, we can require the optimum to satisfy $p_1 = \lambda_1$.
  In turn, we are left with maximizing $p_2$ depending on the $\lambda_j$ which gives $p_2 = p_3 = \frac{\lambda_2 + \lambda_3}{2}$ and leads to the simplified optimization
  \begin{equation}
    \begin{aligned}
      \,&\maxover[\bm{\lambda}]  && \lambda_1 + \frac{\lambda_2+\lambda_3}{2} \\
          &\subto && \lambda_1 \ge \lambda_2 \ge \lambda_3 \ge 1 - \lambda_1 - \lambda_2 - \lambda_3 \ge 0, \\
          &       && (\lambda_1 - \lambda_3)^2 \le 4 \lambda_2 (1 - \lambda_1 - \lambda_2 - \lambda_3). \\
    \end{aligned}
  \end{equation}
  It is certainly optimal to maximize $\lambda_1$ until either $1 - \lambda_1 - \lambda_2 - \lambda_3 = 0$ or $(\lambda_1 - \lambda_3)^2 = 4 \lambda_2 (1 - \lambda_1 - \lambda_2 - \lambda_3)$.
  In the first case, we can replace $\lambda_3$ by $\lambda_3 = 1 - \lambda_1 - \lambda_2$.
  Hence, we obtain the constraint $(2\lambda_1 + \lambda_2 -1)^2 \le 0$ implying $\lambda_2 = 1 - 2\lambda_1$.
  Then, it must hold that $\lambda_1 \ge 1 - 2\lambda_1 \ge \lambda_1$ and hence, $\lambda_1 = \frac{1}{3}$ leading to an optimal value of $2/3$.
  
  In the other case, we use the method of Lagrange multipliers for the added equality constraint and obtain $\lambda_1 = \frac{1}{2}$, $\lambda_2 = 0$, and $\lambda_3 = \frac{1}{2}$ which does not satisfy the inequalities.
  Hence, we look for the optimal solution at one of the boundaries $\lambda_1 = \lambda_2$, $\lambda_2 = \lambda_3$ or $\lambda_3 = 1 - \lambda_1 - \lambda_2 - \lambda_3$.
  
  For $\lambda_1 = \lambda_2$, we obtain via the method of Lagrange multipliers a single feasible point $\lambda_1 = \lambda_2 = (3 + \sqrt{3})/12$, $\lambda_3 = -1 + (3 + \sqrt{3})/4$ with objective value $(1 + \sqrt{3})/4 \approx 0.68$.
  In the case of $\lambda_2 = \lambda_3$, the same analysis gives the feasible point $\lambda_1 = (-1 + 2\sqrt{2})/4$, $\lambda_2 = \lambda_3 = 1/4$ with objective value $1/\sqrt{2} \approx 0.71$.
  The last case, $\lambda_3 = 1 - \lambda_1 - \lambda_2 - \lambda_3$, leads to an objective value of $(1 + \sqrt{3})/4 \approx 0.68$ at $\lambda_1 = (-3 + 5\sqrt{3})/12$, $\lambda_2 = (3 - \sqrt{3})/4$, and $\lambda_3 = (3 - \sqrt{3})/12$.
  
  Thus, the optimum is $1/\sqrt{2}$.\\
  
  For the minimal expectation value, we use the fact that
  \begin{equation}
    \min_\rho \trace A\rho = 1 - \max_\rho \trace (\mathds{1} - A)\rho.
  \end{equation}
  Since $\mathds{1} - A(\frac{1}{2},\frac{1}{2})$ is equivalent to $A(\frac{1}{2},\frac{1}{2})$ when we only consider eigenvalues, the minimal expectation value is given by $1 - 1/\sqrt{2}$ and the volume between maximal and minimal expectation value for absolutely separable states is $1/\sqrt{2} - (1 - 1/\sqrt{2}) = \sqrt{2} - 1$.
\end{proof}

\section{Volume calculation for multiple instances of two-qubit observables}\label{app:2quintances}
First, we show the calculation of the volume ratio for measurements $A_1 = 0 \oplus X \oplus 0$, $A_2 = 0 \oplus Y \oplus 0$, and $A_3 = 0 \oplus Z \oplus 0$, where $X$, $Y$, and $Z$ are the Pauli matrices.
Furthermore, we also consider the restriction to measurements $A_1$ and $A_2$ only.
The joint numerical range is given by a Bloch ball on the subspace spanned by $\ket{01}$ and $\ket{10}$, i.e., a ball of radius $1$.
Thus, the two- and three-dimensional volumes are given by $\pi$ and $\frac{4}{3}\pi$, respectively.
Second, the separable numerical range is symmetric w.r.t.~rotations around the axis of the third measurement.
More precisely, the local unitaries $U = U_1 \otimes U_2$, where
\begin{align}
  U_1 = \begin{pmatrix} 1 & 0 \\ 0 & e^{i\varphi/2} \end{pmatrix}, \quad
  U_2 = \begin{pmatrix} 1 & 0 \\ 0 & e^{-i\varphi/2} \end{pmatrix},
\end{align}
leave $A_3 = U A_3 U^\dagger$ invariant while continuously transforming $A_1$ and $A_2$ as $U A_1 U^\dagger = \cos\varphi A_1 + \sin\varphi A_2$ and $U A_2 U^\dagger = -\sin\varphi A_1 + \cos\varphi A_2$, respectively.
Because for local unitaries $U$, $\trace \rho_\text{sep} U A_i U^\dagger = \trace U^\dagger \rho_\text{sep} U A_i = \trace \sigma_\text{sep} A_i$ for any separable state $\rho_\text{sep}$ and separable $\sigma_\text{sep} = U^\dagger \rho_\text{sep} U$, the above transformation implies the rotational symmetry.
Thus, it is sufficient to solve the parametric optimization
\begin{equation}
  \begin{aligned}
        \,&\maxover[\rho \in \Sep]  && \trace \rho A_1 \\
          &\subto && \trace \rho A_2 = 0, \\
          &       && \trace \rho A_3 = c,
  \end{aligned}
\end{equation}
where $-1 \le c \le 1$.
As the separable numerical range is the convex hull of the pure-product numerical range, coming from pure product states, we consider general product states $\ket{\alpha}\ket{\beta}$ with
\begin{align}
\ket{\alpha} = \cos\frac{\alpha}{2} \ket{0} + e^{i\phi} \sin\frac{\alpha}{2} \ket{1}, \\
\ket{\beta} = \cos\frac{\beta}{2} \ket{0} + e^{i\psi} \sin\frac{\beta}{2} \ket{1}.
\end{align}
We have that $\bra{\alpha\beta} A_1 \ket{\alpha\beta} = \frac{1}{2} \cos(\phi - \psi) \sin\alpha \sin\beta$ and $\bra{\alpha\beta} A_3 \ket{\alpha\beta} = \frac{1}{2} (\cos\alpha - \cos\beta)$.
Hence, to maximize $\braket{A_1}$, certainly $\cos(\phi - \psi) = 1$ since we can choose the signs of $\sin\alpha$ and $\sin\beta$ independently from those of $\cos\alpha$ and $\cos\beta$.
The choice $\phi = \psi = 0$ not only gives $\cos(\phi - \psi) = 1$, but also makes sure $\ket{\alpha\beta}$ satisfies $\braket{A_2} = 0$ as $A_2$ is a skew-symmetric matrix and, in this case, $\ket{\alpha\beta}$ is a real-valued vector in the computational basis.
Actually, it is clear from the rotational symmetry that minimal and maximal $\langle A_1 \rangle$ are reached for $\langle A_2 \rangle = 0$.
Thus, we are left with optimizing
\begin{equation}
  \begin{aligned}
        \,&\maxover[\alpha,\beta]  && \sin\alpha \sin\beta \\
          &\subto && \cos\alpha - \cos\beta = c'.
  \end{aligned}
\end{equation}
To solve this, we write $x = \cos\alpha$, $y = -\cos\alpha$, and $\sin\alpha \sin\beta = \sqrt{(1-x^2)(1-y^2)}$ since we can always choose the positive solutions for the sines for given cosines.
Because the square root is a monotonic function, it is equivalent to maximize $(1-x^2)(1-y^2) = 1 - c'^2 + xy(xy+2)$ and, as $-1 \le xy$, also equivalent to maximize $xy$, leaving us with
\begin{equation}
  \begin{aligned}
        \,&\maxover[x,y]  && xy \\
          &\subto && x + y = c'.
  \end{aligned}
\end{equation}
By changing both the signs of $x$ and $y$, $c'$ can always be chosen nonnegative, and we have to find the rectangle with largest volume for given circumference, which is known to be a square.
Hence, $x = y$, or equivalently $\alpha = \beta$, provides the maximum with $\braket{A_1} = \frac{1}{2} (1 - c^2)$.
As this line provides the boundary of a convex set, the pure-product and separable numerical range coincide; see Fig.~\ref{fig:223_1over5} for a visualization.
The corresponding volume is given by
\begin{align}
  \vol L_\Sep(A_1,A_2,A_3) = \pi \int_{-1}^1 dc \left[ \frac{1}{2} (1 - c^2) \right]^2 = \frac{4}{15} \pi,
\end{align}
and the relative volume is $\frac{1}{5}$.
In the two-dimensional case restricted to observables $A_1$ and $A_2$, maximal $\bra{\alpha\beta}A_1\ket{\alpha\beta}$ independent from $c$ gives us the relevant volume as
\begin{align}
  \vol L_\Sep(A_1,A_2) = \left(\frac{1}{2}\right)^2 \pi = \frac{\pi}{4},
\end{align}
which leads to a relative volume of $\frac{1}{4}$.

\begin{figure}[t!]
  \centering
  \includegraphics[width=0.4\linewidth]{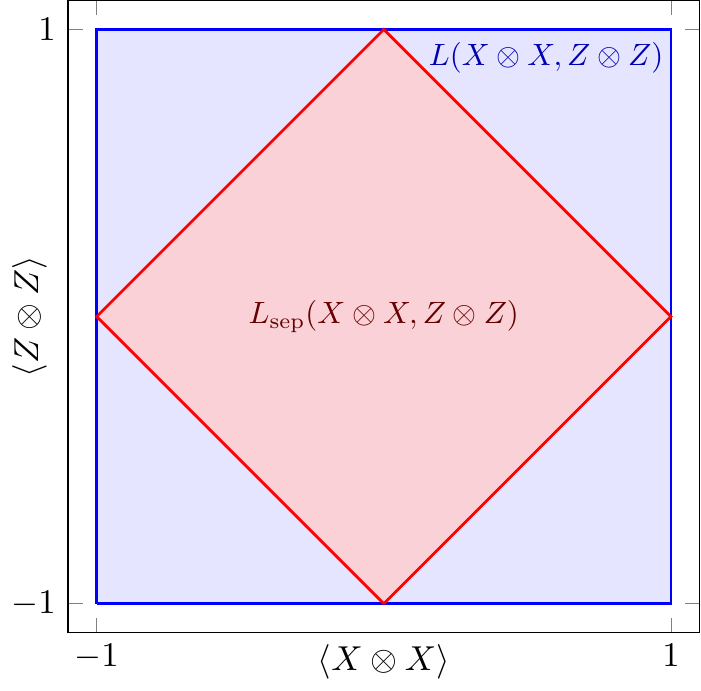}
  \caption{This figure shows the (separable) numerical range for observables $X \otimes X$ and $Z \otimes Z$.}
  \label{fig:sqinsq}
\end{figure}
Second, we consider the measurements $A_1 = X\otimes X$, $A_2 = X\otimes Y$, and $A_3 = Z\otimes Z$ which yield a ratio of $1/3$. 
For the computation, we again make use of the fact that simultaneous local unitary transformations of the observables do not alter the numerical range.
In this case, we consider $U = \mathds{1} \otimes \exp(-iZ\varphi/2)$ which leaves $A_3$ invariant and transforms $A_1$ and $A_2$ as
\begin{align}
  A_1(\varphi) &= \cos\varphi X\otimes X + \sin\varphi X\otimes Y, \\
  A_2(\varphi) &= -\sin\varphi X\otimes X + \cos\varphi X\otimes Y.
\end{align}
Thus, it corresponds to a rotation around the $A_3$-coordinate axis and hence, the numerical range is rotationally invariant around this axis and it suffices to consider $A_1$ and $A_3$.
The well known entanglement witnesses $W = \mathds{1} \pm X\otimes X \pm Z\otimes Z$ bound the separable numerical range, while with general quantum states any $\langle A_1 \rangle$ can be reached independently of $\langle A_3 \rangle$ and vice verse.
This leaves us with a square inside a square as shown in Fig.~\ref{fig:sqinsq}.
The volume ratio of the two solids of revolution generated by rotating around the $A_3$-axis is $1/3$, given by two cones glued to each other at the base inside a cylinder.

Finally, additionally measuring the observable $A_4 = Y\otimes Z$ yields a volume ratio of $1/6$.
Similar to before, we, in addition, consider the local unitary transformation $V = \exp(-iX\phi/2) \otimes \mathds{1}$, which leaves $A_1$ and $A_2$ invariant, while transforming $A_3$ and $A_4$ as
\begin{align}
  A_3(\phi) &= \cos\phi Z\otimes Z + \sin\phi Y\otimes Z, \\
  A_4(\phi) &= -\sin\phi Z\otimes Z + \cos\phi Y\otimes Z.
\end{align}
Thus, the resulting four-dimensional solids of revolution have a volume of
\begin{align}
  \vol L &= \int_0^1 dR \int_0^{2\pi} d\varphi R \int_0^1 dr \int_0^{2\pi} r = \pi^2, \\
  \vol L_\Sep &= \int_0^1 dR \int_0^{2\pi} d\varphi R \int_0^{1-R} dr \int_0^{2\pi} r = \frac{\pi^2}{6},
\end{align}
and hence, we obtain a volume ratio of $1/6$.

\section{Proof of Theorem~\ref{thm:222sep}}\label{app:222sep}
In this Appendix, the following result is established.
\setcounter{definition}{8}
\begin{theorem}
For locally traceless two-qubit product observables $A_1$ and $A_2$ written in the standard form as in Eqs.~(\ref{eq:ltobs1},\,\ref{eq:ltobs2}), the volume ratio is 
\begin{equation}
    \frac{\vol L_\Sep}{\vol L} = \frac{\pi}{8} \left[ \left( \left|\sin\theta_-\right|+\left|\sin\theta_+\right| \right) - \frac{ \tilde{F}(\theta_-,\theta_+)}{\tilde{T}(\theta_-,\theta_+)} \right] /
                \left[ \cos\theta_- - \cos\theta_+ + G_-(\theta_-) + G_+(\theta_+) \right],
\end{equation}
where $\theta_\pm = \theta_A \pm \theta_B$ as well as $G_-(\theta_-) = |\frac{\theta_-}{2} \sin\theta_-|$, $G_+(\theta_+) = |\left( \frac{\theta_+}{2}-\frac{\pi}{2} \right) \sin\theta_+|$, $\tilde{F}(\theta_-,\theta_+) = F(\theta_-,\theta_+) - F(\theta_+,\theta_-)$, $\tilde{T}(\theta_-,\theta_+) = T(\theta_-,\theta_+) - T(\theta_+,\theta_-)$, and the functions $F$ and $T$ are given by
\begin{align}
  F(x,y) &= \left| \sin\frac{x}{2}\cos\frac{y}{2} \right| \left[ K\left( 1-T^2(x,y) \right) - E\left( 1-T^2(x,y) \right) \right], \\
  T(x,y) &= \left| \frac{\tan\frac{x}{2}}{\tan\frac{y}{2}} \right|,
\end{align}
where $K(\cdot)$ and $E(\cdot)$ are the elliptic integrals of the first and second kind, respectively.
\end{theorem}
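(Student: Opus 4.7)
The plan is to give explicit planar descriptions of $L_{\Sep}$ and $L$ as regions built from two ellipses, and then compute their areas. For the separable side, I would parametrize pure product states by single-qubit Bloch angles $\alpha,\beta$ in the $xz$-plane (the $y$-components are irrelevant because $A_1,A_2$ only involve local $X$ and $Z$), giving $\langle A_1\rangle=\sin\alpha\sin\beta$ and $\langle A_2\rangle=\sin(\alpha+\theta_A)\sin(\beta+\theta_B)$. The product-to-sum identity in $u=\alpha-\beta$, $v=\alpha+\beta$ then yields
\[
(\langle A_1\rangle,\langle A_2\rangle)=\tfrac{1}{2}\bigl(\cos u-\cos v,\;\cos(u+\theta_-)-\cos(v+\theta_+)\bigr),
\]
which I would recognize as half the Minkowski sum of the two centrally symmetric ellipse curves $\mathcal{E}_\pm=\{(\cos t,\cos(t+\theta_\pm))\}$. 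Since taking convex hulls commutes with Minkowski sums, this gives $L_{\Sep}=\tfrac{1}{2}F_-+\tfrac{1}{2}F_+$, where $F_\pm$ are the filled ellipses bounded by $\mathcal{E}_\pm$; this is the red region of Fig.~\ref{fig:prod_obs}.

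For the standard numerical range I would exploit that $A_1,A_2$ are real symmetric, so $L$ equals the convex hull of the real joint numerical range. Writing a real pure state as $a\ket{00}+b\ket{01}+c\ket{10}+d\ket{11}$, the linear change of variables $p=a+d$, $q=a-d$, $r=b+c$, $s=b-c$ together with the polar substitutions $(p,s)=\sqrt{2\mu}(\cos\alpha,\sin\alpha)$ and $(q,r)=\sqrt{2(1-\mu)}(\sin\beta,\cos\beta)$ collapses the expectation values to
\[
\langle A_1\rangle=\mu\cos\lambda+(1-\mu)\cos\nu,\qquad\langle A_2\rangle=\mu\cos(\lambda-\theta_-)+(1-\mu)\cos(\nu-\theta_+),
\]
with $\lambda=2\alpha$, $\nu=2\beta$ and $\mu\in[0,1]$. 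Every point of the real joint numerical range is thus a convex combination of one point on $\mathcal{E}_-$ and one point on $\mathcal{E}_+$, so $L=\mathrm{conv}(\mathcal{E}_-\cup\mathcal{E}_+)$. A crucial simplification for what follows is that the two quadratic forms cutting out $F_\pm$ share the eigenvectors $(1,\pm 1)/\sqrt{2}$, so after a rigid rotation by $\pi/4$ both ellipses are axis-aligned with semi-axes $\sqrt{2}|\cos(\theta_\pm/2)|$ and $\sqrt{2}|\sin(\theta_\pm/2)|$ and have the clean support functions $h_\pm(\theta)=\sqrt{1+\cos\theta_\pm\cos 2\theta}$.

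From here the two areas are a direct computation. For $\vol L_{\Sep}$ I would apply the Cauchy support-function area formula to $h_-+h_+$ to obtain $\tfrac{1}{4}[\pi|\sin\theta_-|+\pi|\sin\theta_+|+2V(F_-,F_+)]$; the mixed area $V(F_-,F_+)=\tfrac{1}{2}\int_0^{2\pi}(h_-h_+-h_-'h_+')\,d\theta$ is an integral of $\sqrt{(1+\cos\theta_-\cos 2\theta)(1+\cos\theta_+\cos 2\theta)}$ that reduces via $s=\sin^2\theta$ to the Legendre normal form of the complete elliptic integrals $K,E$, producing the factors $K(1-T^2)$ and $E(1-T^2)$ with $T=|\tan(\theta_-/2)/\tan(\theta_+/2)|$. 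For $\vol L$ I would use that $\mathrm{conv}(F_-\cup F_+)$ is bounded by four ellipse arcs (the portions of $\partial F_\pm$ lying outside the other ellipse) joined by four common external tangent segments; the tangency conditions for coaxial ellipses give the endpoints in closed form, and integrating $\tfrac{1}{2}(x\,dy-y\,dx)$ piecewise yields the $\cos\theta_--\cos\theta_+$ contribution from the ellipse arcs and the $G_\pm$ contributions from the tangent sweeps, with $\theta_-/2$ and $\theta_+/2-\pi/2$ arising as the parameter ranges of those arcs.

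The main obstacle is the mixed-area elliptic integral together with the accompanying case analysis. Splitting the $\theta$-domain according to which of $h_-,h_+$ dominates produces two essentially different elliptic integrals that swap the roles of $\theta_-$ and $\theta_+$, which is exactly what forces the antisymmetric combinations $\tilde F(\theta_-,\theta_+)=F(\theta_-,\theta_+)-F(\theta_+,\theta_-)$ and $\tilde T(\theta_-,\theta_+)$ to appear. Tracking the branches of $|\sin(\theta_\pm/2)|$, $|\cos(\theta_\pm/2)|$ and $|\tan(\theta_\pm/2)|$ so that the final answer remains valid for any choice of $\theta_A,\theta_B$ is the main source of bookkeeping; once this is done, one arrives at $\vol L_{\Sep}=\tfrac{\pi}{4}\bigl[(|\sin\theta_-|+|\sin\theta_+|)-\tilde F/\tilde T\bigr]$ and $\vol L=2\bigl[\cos\theta_--\cos\theta_++G_-(\theta_-)+G_+(\theta_+)\bigr]$, and the stated ratio follows. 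Consistency in the limiting cases (e.g., $\theta_A=\theta_B=\pi/2$, where $L=[-1,1]^2$ and $L_{\Sep}$ is the rhombus with vertices $(\pm 1,0),(0,\pm 1)$) is a useful cross-check that the sign conventions in the piecewise reduction are correct.
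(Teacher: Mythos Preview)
Your proposal is correct and the geometric picture --- $L_{\Sep}$ as a Minkowski sum of two concentric ellipses, $L$ as their convex hull --- is exactly what the paper establishes. Where you differ is in how you arrive at the two pieces and how you compute the areas.

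For $L_{\Sep}$, both you and the paper parametrize pure product states and recognize the Minkowski-sum structure; the paper then sets up a constrained optimization (Lagrange-type condition $\tan z_-/|\tan\tfrac{\theta_-}{2}| = \tan z_+/|\tan\tfrac{\theta_+}{2}|$) and integrates directly, whereas you use the support-function area formula $\tfrac{1}{2}\int h(h+h'')\,d\theta$ and the mixed-area identity. Your route is cleaner conceptually and makes the appearance of the elliptic integrals more transparent, since the integrand $\sqrt{(1+\cos\theta_-\cos 2\theta)(1+\cos\theta_+\cos 2\theta)}$ is visibly in Legendre form.

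For $L$, the paper takes the algebraic route: it writes the dual (Kippenhahn) curve $\det(uA_1+vA_2+w\mathbb{1})=0$, computes a Gr\"obner basis to extract the two ellipses, and then separately exhibits explicit families of states tracing them out. Your argument is genuinely different and more elementary: you exploit that $A_1,A_2$ are real symmetric so that every supporting hyperplane of $L$ is attained by a real state, and your change of variables $(p,q,r,s)\to(\mu,\alpha,\beta)$ directly exhibits the image of the real unit sphere as the set of convex combinations of one point on each ellipse. This bypasses the algebraic-geometry machinery entirely and simultaneously identifies the boundary curves and the states realizing them. The paper's approach has the advantage of being a standard procedure that would generalize to observables without a convenient real structure; yours buys a self-contained argument with no computer algebra. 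The final area computations (arcs plus tangent segments for $L$) are then the same in both.
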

Since rescaling as well as local unitary transformations preserve the volume ratio, it is sufficient to consider observables $A_1 = X \otimes X$ and $A_2 = (\cos\theta_A X + \sin\theta_A Z) \otimes (\cos\theta_B X + \sin\theta_B Z)$ with $0 \le \theta_A, \theta_B \le \pi$.
To prove Theorem~\ref{thm:222sep}, we compute the separable as well as the standard numerical range for these observables explicitly via the following Lemmata.
\setcounter{definition}{11}
\begin{lemma}\label{lem:twoqu_seploctr}
For two-qubit observables $A_1 = X \otimes X$ and $A_2 = (\cos\theta_A X + \sin\theta_A Z) \otimes (\cos\theta_B X + \sin\theta_B Z)$ where $0 \le \theta_A, \theta_B \le \pi$, the volume of the separable numerical range is
\begin{equation}
    \vol L_\Sep = \frac{\pi}{4}\left( \left|\sin\theta_-\right|+\left|\sin\theta_+\right| \right) + \frac{2 \left[ F(\theta_-,\theta_+) - F(\theta_+,\theta_-) \right]}{T(\theta_+,\theta_-) - T(\theta_-,\theta_+)},
\end{equation}
where the functions $F$ and $T$ are given by
\begin{align}
  F(x,y) &= \left| \sin\frac{x}{2}\cos\frac{y}{2} \right| \left[ K\left( 1-T^2(x,y) \right) - E\left( 1-T^2(x,y) \right) \right], \\
  T(x,y) &= \left| \frac{\tan\frac{x}{2}}{\tan\frac{y}{2}} \right|
\end{align}
and $\theta_\pm = \theta_A \pm \theta_B$ and $K(\cdot)$ and $E(\cdot)$ are the elliptic integrals of the first and second kind, respectively.
\end{lemma}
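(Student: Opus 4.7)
The plan is to identify the separable numerical range with an explicit Minkowski sum of two ellipses and then compute its area via the standard decomposition into the two ellipse areas and a mixed area, the latter supplying the elliptic-integral piece of the formula.

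Since $A_1$ and $A_2$ act on each qubit only through $X$ and $Z$, only the projection $(x_i, z_i)$ of each local Bloch vector onto the XZ-plane matters, and it ranges over the closed unit disk. Parametrizing the boundary by $x_i = \sin\alpha_i$, $z_i = \cos\alpha_i$ yields $\cos\theta_i x_i + \sin\theta_i z_i = \sin(\alpha_i + \theta_i)$, so pure product states give $\langle A_1 \rangle = \sin\alpha_A \sin\alpha_B$ and $\langle A_2 \rangle = \sin(\alpha_A + \theta_A)\sin(\alpha_B + \theta_B)$. Product-to-sum identities and the change of variables $u = \alpha_A - \alpha_B$, $v = \alpha_A + \alpha_B + \pi$ recast the pair as
\begin{equation}
\tfrac{1}{2}\bigl(\cos u,\,\cos(u + \theta_-)\bigr) + \tfrac{1}{2}\bigl(\cos v,\,\cos(v + \theta_+)\bigr),
\end{equation}
exhibiting the pure-product numerical range as the Minkowski sum of two ellipse boundaries $\partial E_\pm$ centered at the origin. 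A short convexity argument (any point in a filled ellipse is a convex combination of two boundary points, and such combinations commute with the Minkowski sum) upgrades this to the identity $L_\Sep = E_- \oplus E_+$ between the separable numerical range and the Minkowski sum of the filled ellipses; mixed product states with $r_A r_B < 1$ sit strictly inside and contribute nothing new.

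The standard decomposition gives $\vol L_\Sep = \mathrm{area}(E_-) + \mathrm{area}(E_+) + 2 V(E_-, E_+)$. Each ellipse $E_\pm$ is the image of the unit disk under a $2\times 2$ matrix of absolute determinant $\tfrac{1}{4}|\sin\theta_\pm|$, so $\mathrm{area}(E_\pm) = \tfrac{\pi}{4}|\sin\theta_\pm|$, producing the first summand in the claimed formula. For the mixed area I compute the support functions directly from the matrix representations, obtaining $h_\pm(\phi) = \tfrac{1}{2}\sqrt{1 + \cos\theta_\pm \sin 2\phi}$, and then substitute into the identity $V(E_-,E_+) = \tfrac{1}{2}\int_0^{2\pi}(h_- h_+ - h_-' h_+')\,d\phi$. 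Setting $s = \sin 2\phi$ reduces the integrand to a rational expression in $s$ divided by $\sqrt{(1 + \cos\theta_- s)(1 + \cos\theta_+ s)}$, the canonical form producing complete elliptic integrals of the first and second kinds.

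The main obstacle is this elliptic reduction and the associated bookkeeping. Bringing the integral into Legendre normal form introduces the modulus $1 - T^2$ with $T(\theta_-,\theta_+) = |\tan(\theta_-/2)/\tan(\theta_+/2)|$; the combination $K(1-T^2) - E(1-T^2)$ with coefficient $|\sin(\theta_-/2)\cos(\theta_+/2)|$ yields the function $F(\theta_-,\theta_+)$, while the symmetric treatment that exchanges the roles of $\theta_+$ and $\theta_-$ in the parametrization produces $F(\theta_+,\theta_-)$. The mixed-area integral then collapses precisely into the antisymmetric combination $2[F(\theta_-,\theta_+) - F(\theta_+,\theta_-)]/[T(\theta_+,\theta_-) - T(\theta_-,\theta_+)]$ of the lemma. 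The delicate points are tracking signs and branch cuts consistently on the assumed range $0 \le \theta_A,\theta_B \le \pi$, and recovering the degenerate limits $\sin\theta_\pm = 0$ (one ellipse collapses to a segment) and $T \to 0,\infty$ as limits of the generic formula.
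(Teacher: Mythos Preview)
Your proposal shares the paper's key geometric insight---the separable numerical range is the Minkowski sum of two origin-centred ellipses $E_\pm$---and your parametrization via the XZ Bloch-plane and the change of variables $(u,v)$ is essentially the same as the paper's rotation by $-\pi/4$ and introduction of the variables $z_\pm$.

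Where you diverge is in the area computation. The paper does not invoke mixed areas: it slices the Minkowski sum directly, fixing $\langle\tilde A_2\rangle=c$ and extremizing $\langle\tilde A_1\rangle$ as a constrained optimization in $(z_-,z_+)$. The Lagrange condition $\tan z_-/|\tan(\theta_-/2)|=\tan z_+/|\tan(\theta_+/2)|$ determines the boundary, and the substitution $u=\tan z_-$ turns the width integral into
\[
\int_0^\infty\!\!du\Bigl[\tfrac{|\sin\theta_-|}{(1+u^2)^2}+\tfrac{|\sin\theta_+|\,T}{(1+T^2u^2)^2}+\text{(two mixed terms)}\Bigr],
\]
where the first two pieces yield $\tfrac{\pi}{4}|\sin\theta_\pm|$ and the cross terms are standard elliptic integrals in $u$. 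Your route via $\operatorname{area}(E_-\oplus E_+)=\operatorname{area}(E_-)+\operatorname{area}(E_+)+2V(E_-,E_+)$ and the support-function formula $V=\tfrac12\int(h_-h_+-h_-'h_+')\,d\phi$ is a genuinely different and more structural computation: it explains \emph{a priori} why the $\tfrac{\pi}{4}(|\sin\theta_-|+|\sin\theta_+|)$ term appears (pure ellipse areas) and isolates the elliptic-integral contribution as a mixed area. The paper's direct slicing is more elementary and produces integrands in which the Legendre reduction is immediate; your support-function integral requires one more substitution layer before the same reduction, but the bookkeeping is comparable. Both are correct; the final elliptic reduction, which you only sketch, is real work in either approach.
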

\begin{proof}
  For a general product state
  \begin{equation}
     \ket{\psi_{\otimes}} = \left( \cos\theta_1 \ket{0} + e^{i\phi_1}\sin\theta_1 \ket{1} \right) \otimes \left( \cos\theta_2 \ket{0} + e^{i\phi_2}\sin\theta_2 \ket{1} \right),
  \end{equation}
  we have that 
  \begin{align}
    \bra{\psi_{\otimes}} A_1 \ket{\psi_{\otimes}} = &\cos\theta_1 \cos\theta_2, \\
    \bra{\psi_{\otimes}} A_2 \ket{\psi_{\otimes}} = &\left( \cos\theta_1 \cos\theta_A + \cos\phi_1 \sin\theta_1 \sin\theta_A \right) \times \left( \cos\theta_2 \cos\theta_B + \cos\phi_2 \sin\theta_2 \sin\theta_B \right).
  \end{align}
  Clearly, for fixed $\bra{\psi_{\otimes}} A_1 \ket{\psi_{\otimes}}$, the extremal points of $\bra{\psi_{\otimes}} A_2 \ket{\psi_{\otimes}}$ are reached for $\cos\phi_1 = \pm 1$ and $\cos\phi_2 = \pm 1$.
  Thus, allowing $\theta_1,\theta_2 \in [0,2\pi)$, we can fix $\cos\phi_1 = \cos\phi_2 = 1$ to calculate the boundary of the separable numerical range.
  Hence, we obtain
  \begin{align}
    \bra{\psi_{\otimes}} A_1 \ket{\psi_{\otimes}} &= \frac12 \left( \cos\theta_- + \cos\theta_+ \right), \\
    \bra{\psi_{\otimes}} A_2 \ket{\psi_{\otimes}} &= \frac12 \left[ \cos(\theta_- - \theta_-) + \cos(\theta_+ - \theta_+) \right],
  \end{align}
  where $\theta_\pm = \theta_1 \pm \theta_2$ and $\theta_\pm = \theta_A \pm \theta_B$.
  To simplify the calculation, it is advantageous to perform a rotation by $-\frac{\pi}{4}$ such that $\tilde{A}_1 = \frac{1}{\sqrt{2}} (A_1 + A_2)$ and $\tilde{A}_2 = \frac{1}{\sqrt{2}} (-A_1 + A_2)$.
  Again, this does not change the volume of the separable numerical range.
  Then,
  \begin{align}
    \bra{\psi_{\otimes}} \tilde{A}_1 \ket{\psi_{\otimes}} &= \frac{1}{\sqrt{2}} \left( \cos\frac{\theta_-}{2} \cos z_- + \cos\frac{\theta_+}{2} \cos z_+ \right), \\
    \bra{\psi_{\otimes}} \tilde{A}_2 \ket{\psi_{\otimes}} &= \frac{1}{\sqrt{2}} \left( \sin\frac{\theta_-}{2} \sin z_- + \sin\frac{\theta_+}{2} \sin z_+ \right),
  \end{align}
  where $z_\pm = \frac{\theta_\pm}{2} - \theta_\pm \in [0,2\pi)$.
  Thus, the enclosed area is the Minkowski sum of two ellipses, and hence, it is convex.
  Indeed, in this case, the product numerical range and the separable numerical range coincide as varying $z_\pm$ continuously traces out the entire boundary.
  Therefore, to calculate the volume, we fix $\bra{\psi_{\otimes}} \tilde{A}_2 \ket{\psi_{\otimes}}$ while maximizing and minimizing $\bra{\psi_{\otimes}} \tilde{A}_1 \ket{\psi_{\otimes}}$.
  Because we can change the signs of $\cos z_\pm$ and $\sin z_\pm$ independently, the maximum and minimum actually coincide apart from the sign.
  Thus, it is sufficient to solve the optimization problem
  \begin{equation}\label{eq:prod_opt}
    \begin{aligned}
      \max_{z_\pm \in [0,\frac{\pi}{2}]} &\left| \cos\frac{\theta_-}{2} \right| \cos z_- + \left| \cos\frac{\theta_+}{2} \right| \cos z_+ \\
                           \text{s.t. } &\left| \sin\frac{\theta_-}{2} \right| \sin z_- + \left| \sin\frac{\theta_+}{2} \right| \sin z_+ = |c|,
    \end{aligned}
  \end{equation}
  where $|c| \in [0,\left| \sin\frac{\theta_-}{2} \right| + \left| \sin\frac{\theta_+}{2} \right|$.
  We can restrict ourselves to $z_\pm \in [0,\frac{\pi}{2}]$ since it is obviously optimal to have $\cos z_\pm \ge 0$, and if one of the $\sin z_\pm$ were negative, making it positive would reduce the other and lead to a larger objective value.
  Since
  \begin{equation}
  \frac{d\left( \left| \cos\frac{\theta_\pm}{2} \right| \cos z_\pm \right)}{d\left( \left| \sin\frac{\theta_\pm}{2} \right| \sin z_\pm \right)} = - \frac{\tan z_\pm}{\left| \tan\frac{\theta_\pm}{2} \right|},
  \end{equation}
  a maximum is reached when $\frac{\tan z_-}{\left| \tan\frac{\theta_-}{2} \right|} = \frac{\tan z_+}{\left| \tan\frac{\theta_+}{2} \right|}$.
  This equation always describes a feasible point of the optimization as $\tan z_\pm$ approaches infinity when $\sin z_\pm$ goes to $1$, and hence, continuously changing $\sin z_-$ from its minimal allowed value for a given $|c|$ to $1$ leads to a continuous change of $\sin z_+$ from $1$ to its minimal allowed value via the constraint, and somewhere along the way, the condition $\frac{\tan z_-}{\left| \tan\frac{\theta_-}{2} \right|} = \frac{\tan z_+}{\left| \tan\frac{\theta_+}{2} \right|}$ is satisfied.
  Thus, we obtain
  \begin{equation}
    \vol L_\Sep(A_1,A_2) = 2 \int_0^{\left| \sin\frac{\theta_-}{2} \right| + \left| \sin\frac{\theta_+}{2} \right|} \text{opt}(c) dc,
  \end{equation}
  where $\text{opt}(c)$ is the result of the optimization in Eq.~(\ref{eq:prod_opt}).
  Since we have that $z_\pm \in [0,\frac{\pi}{2}]$,
  \begin{align}
    c &= \left| \sin\frac{\theta_-}{2} \right| \sin z_- + \left| \sin\frac{\theta_+}{2} \right| \sin z_+ \nonumber \\
      &= \left| \sin\frac{\theta_-}{2} \right| \frac{\tan z_-}{\sqrt{1 + \tan^2 z_-}} + \left| \sin\frac{\theta_+}{2} \right| T(\theta_+,\theta_-) \frac{\tan z_-}{\sqrt{1 + \left( T(\theta_+,\theta_-) \tan z_- \right)^2}}, \\
    \text{opt}(c) &= \left| \cos\frac{\theta_-}{2} \right| \cos z_- + \left| \cos\frac{\theta_+}{2} \right| \cos z_+ \nonumber \\
                  &= \left| \cos\frac{\theta_-}{2} \right| \frac{1}{\sqrt{1 + \tan^2 z_-}} + \left| \cos\frac{\theta_+}{2} \right| \frac{1}{\sqrt{1 + \left( T(\theta_+,\theta_-) \tan z_- \right)^2}},
  \end{align}
  changing the integration variable from $c$ to $u = \tan z_-$ leads to
  \begin{equation}
    \begin{aligned}
    \vol L_\Sep &= \int_0^{\infty} du \left[ \frac{\left| \sin \theta_- \right|}{\left( 1 + u^2 \right)^2} + \frac{\left| \sin\theta_+ \right| T(\theta_+,\theta_-)}{\left( 1 + T^2(\theta_+,\theta_-) u^2 \right)^2} + \frac{2 \left| \sin\frac{\theta_-}{2} \cos\frac{\theta_+}{2} \right|}{\left( 1 + u^2 \right)^{\frac{3}{2}} \left( 1 + T^2(\theta_+,\theta_-) u^2 \right)^{\frac{1}{2}}}\right. \\
                         &\hspace*{30pt}\left. + \frac{2 \left| \cos\frac{\theta_-}{2} \sin\frac{\theta_+}{2} \right| T(\theta_+,\theta_-)}{\left( 1 + T^2(\theta_+,\theta_-) u^2 \right)^{\frac32} \left( 1 + u^2 \right)^{\frac{1}{2}}} \right] \\
                         &= \frac{\pi}{4}\left( \left|\sin\theta_-\right|+\left|\sin\theta_+\right| \right) + \frac{2\left[ F(\theta_-,\theta_+) - F(\theta_+,\theta_-) \right]}{T(\theta_+,\theta_-) - T(\theta_-,\theta_+)} ,
  \end{aligned}
  \end{equation}
  proving the lemma.
\end{proof}
Fortunately, there is a known procedure for the computation of the joint numerical range of two Hermitian matrices, which we use in the following proof.
\setcounter{definition}{12}
\begin{lemma}\label{lem:twoqu_loctr}
For two-qubit observables $A_1 = X \otimes X$ and $A_2 = (\cos\theta_A X + \sin\theta_A Z) \otimes (\cos\theta_B X + \sin\theta_B Z)$ where $0 \le \theta_A, \theta_B \le \pi$, the volume of the joint numerical range is
\begin{equation}
  \vol L(A_1, A_2) = 2 \left[ \cos\theta_- - \cos\theta_+ + |\frac{\theta_-}{2} \sin\theta_-| + |\left( \frac{\theta_+}{2}-\frac{\pi}{2} \right) \sin\theta_+| \right]
\end{equation}
where $\theta_\pm = \theta_A \pm \theta_B$.
\end{lemma}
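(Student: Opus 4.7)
The plan is to exploit the tensor-product structure to block-diagonalise $A_1$ and $A_2$ in the Bell basis, reducing the four-dimensional problem to the convex hull of two ellipses that can be treated by elementary planar geometry.

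Writing the basis as $(|\Phi^+\rangle,|\Psi^-\rangle,|\Phi^-\rangle,|\Psi^+\rangle)$ and using the identities $\cos\theta_A\cos\theta_B\pm\sin\theta_A\sin\theta_B=\cos\theta_\mp$ and $\sin\theta_A\cos\theta_B\pm\cos\theta_A\sin\theta_B=\sin\theta_\pm$, a direct computation yields the block-diagonal form $A_1=Z\oplus(-Z)$ and $A_2=(\cos\theta_-\,Z+\sin\theta_-\,X)\oplus(-\cos\theta_+\,Z+\sin\theta_+\,X)$. By the elliptical range theorem the joint numerical range of each $2\times 2$ block is a filled ellipse, and parametrising pure states by their Bloch vectors gives the boundary ellipses $E_1=\{(\cos\tau,\cos(\tau-\theta_-))\}$ and $E_2=\{(\cos\tau,\cos(\tau+\theta_+))\}$. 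Because $A_1$ and $A_2$ have no cross-block matrix elements, the expectation values on any $|\psi\rangle=\alpha|\psi_1\rangle+\beta|\psi_2\rangle$ are convex combinations of one point from each ellipse, so $L(A_1,A_2)=\mathrm{conv}(E_1\cup E_2)$, as illustrated in Fig.~\ref{fig:prod_obs}.

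To describe this convex hull I would pass to the support function $h(\phi)=\max(h_1(\phi),h_2(\phi))$, where $h_i(\phi)=\sqrt{1+\sin 2\phi\,\cos\theta_{\mp}}$ (with $\theta_-$ for $i=1$ and $\theta_+$ for $i=2$). Since $\cos\theta_--\cos\theta_+=2\sin\theta_A\sin\theta_B\ge 0$, the ellipse $E_1$ is the supporting one precisely when $\sin 2\phi\ge 0$, so the boundary alternates between arcs of $E_1$ and $E_2$ with switching directions $\phi\in\{0,\pi/2,\pi,3\pi/2\}$. At each switching direction both ellipses share a common tangent of the form $a_j=\pm 1$, so the convex hull acquires four flat segments, each of length $\cos\theta_--\cos\theta_+$, connecting corresponding boundary points such as $(1,\cos\theta_+)$ and $(1,\cos\theta_-)$.

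The area then follows from Green's theorem $\mathrm{Area}=\tfrac12\oint(a_1\,da_2-a_2\,da_1)$, summing contributions from the four arcs and four flat segments. A direct computation of $a_1\,da_2-a_2\,da_1$ on $E_{1,2}$ collapses to $\pm\sin\theta_\pm\,d\tau$, so each elliptic arc contributes $\tfrac12|\sin\theta_\pm|$ times its $\tau$-range. Matching the $\phi$-transitions to the $\tau$-endpoints gives total $\tau$-lengths of $2|\theta_-|$ on $E_1$ and $2|\theta_+-\pi|$ on $E_2$, while each flat segment sweeps out a triangle of signed area $\tfrac12(\cos\theta_--\cos\theta_+)$ with the origin. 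Adding these yields $2(\cos\theta_--\cos\theta_+)+|\theta_-\sin\theta_-|+|(\theta_+-\pi)\sin\theta_+|$, which is the claimed formula after factoring out a $2$. The principal bookkeeping hurdle is determining the correct $\tau$-subintervals for each arc: $E_2$ is naturally traversed clockwise whereas $\mathrm{conv}(E_1\cup E_2)$ is traversed counter-clockwise, and the sign of $\sin\theta_\pm$ flips when $\theta_A-\theta_B$ changes sign or when $\theta_A+\theta_B$ crosses $\pi$; the absolute values in the stated expression absorb exactly these sign ambiguities, so once the orientations are handled carefully the computation reduces to the elementary arc-plus-triangle decomposition sketched above.
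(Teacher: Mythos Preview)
Your approach is correct and reaches the same geometric picture as the paper---the joint numerical range as the convex hull of two ellipses joined by four flat segments on the lines $a_j=\pm 1$---but by a genuinely different and considerably lighter route. The paper proceeds via the Kippenhahn boundary-generating curve: it sets up the dual equation $\det(uA_1+vA_2+w\mathds{1})=0$, dehomogenises, computes a Gr{\"o}bner basis to factor the resulting quartic, and only then recognises the two ellipses (after a $\pi/4$ rotation); it must subsequently exhibit explicit one-parameter families of states that trace out each ellipse to confirm they are actually attained, and finally assembles the area from an octagon plus four elliptic caps. Your Bell-basis block-diagonalisation $A_1=Z\oplus(-Z)$, $A_2=(\cos\theta_-Z+\sin\theta_-X)\oplus(-\cos\theta_+Z+\sin\theta_+X)$ makes the two-ellipse structure and the states realising it immediate via the elliptical range theorem, bypassing the algebraic machinery entirely; the support-function argument $h_1(\phi)\gtrless h_2(\phi)\Leftrightarrow\sin 2\phi\gtrless 0$ cleanly locates the four common tangents, and the Green's-theorem computation with $a_1\,da_2-a_2\,da_1=\pm\sin\theta_\pm\,d\tau$ collapses the arc contributions to a one-line integral. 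What the paper's method buys is generality---the Kippenhahn construction applies to any pair of Hermitian observables---whereas your argument exploits the tensor-product structure specifically; for the problem at hand, however, your route is both shorter and more transparent, and the orientation/sign bookkeeping you flag is indeed the only place where care is needed.
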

\begin{proof}
  We calculate the numerical range explicitly using its generating line $C(A_1,A_2)$ defined by the dual (line) equation
  \begin{equation}\label{eq:genline}
    \det(u A_1 + v A_2 + w \mathds{1}) = 0,
  \end{equation}
  where $u x + v y + w = 0$ is the equation of a supporting line to $L(A_1,A_2)$ in the $x$-$y$-plane.
  Then, the numerical range is given by the convex hull of its generating line \cite{murnaghan1932, kippenhahn1951}.
  We use a usual procedure as described in Ref.~\cite{fiedler1981}, i.e., we dehomogenize by setting $v = 1$, replace $w$ in Eq.~(\ref{eq:genline}) by $w = -u x - y$, and solve for the generating line with the resulting equations
  \begin{equation}
    F(u,x,y) = \det(u A_1 + A_2 - (u x + y) \mathds{1}) = 0,
  \end{equation}
  as well as $\partial F(u,x,y) / \partial u = 0$.
  We compute a Gr{\"o}bner basis \cite{buchberger1970} of this system of polynomial equations that contains the polynomial
  \begin{equation}
  \begin{aligned}
    P(x,y) &= \frac{1}{8} \left[ 2 (-1 + 2 x^2 + 2 y^2)^2 + \cos(4 \theta_A) + 4 \cos(2 \theta_A) (4 x y (x y - \cos\theta_A \cos\theta_B) + (-1 + 2 x^2 + 2 y^2) \cos(2 \theta_B) \right. \\
    &\left. + 16 x y (x y \cos(2 \theta_B) - \cos\theta_A \cos\theta_B (2 (-1 + x^2 + y^2) + \cos(2 \theta_B))) + \cos(4 \theta_B) \right] (-1 + y^2)^2 \sin^2\theta_A \sin^2\theta_B.
  \end{aligned}  
  \end{equation}
  Hence, $P(x,y) = 0$ has to be satisfied for a solution of the system of polynomial equations determining the generating line.
  
  The special cases $\sin\theta_A = 0$ and $\sin\theta_B = 0$ work analogously.
  Thus, let $\sin\theta_A = 0$ and hence, $A_2 = X \otimes (\cos\theta_B X + \sin\theta_B Z)$ which means that there is only a single measurement on the first particle.
  This can be simulated by a separable state in place of a possibly entangled state $\ket{\psi}$.
  Indeed, the state
  \begin{equation}
    \rho_{\text{sep}} = \rho_A \otimes \frac{ \trace_A[ (X \otimes \mathds{1})\ket{\psi}\bra{\psi} ] }{ \trace[ (X \otimes \mathds{1})\ket{\psi}\bra{\psi} ] },
  \end{equation}
  with $\trace X\rho_A = \trace[ (X \otimes \mathds{1})\ket{\psi}\bra{\psi} ]$, yields the same $X\otimes M$ measurement statistics for any observable $M$.
  Hence, it is sufficient to consider product states $\rho = \rho_A \otimes \rho_B$.
  For fixed $\langle A_1 \rangle = \trace X\rho_A \trace X\rho_B$, we optimize $\langle A_2 \rangle = \trace X\rho_A \trace(\cos\theta_B X + \sin\theta_B Z)\rho_B$ using the notation $r_X = \trace X\rho_A$, $s_X = \trace X\rho_B$, and $s_Z = \trace Z\rho_B$,
  \begin{equation}
  \begin{aligned}
    \optover[r_X,s_X,s_Z] &r_X (\cos\theta_B s_X + \sin\theta_B s_Z) \\
    \subto\quad &r_X s_X = \langle A_1 \rangle, \\
           &-1 \le r_X \le 1, \\
           &s_X^2 + s_Z^2 \le 1,
  \end{aligned}
  \end{equation}
  to get the boundary of the product numerical range whose convex hull is the joint numerical range.
  Replacing $r_X$ in the optimization by $\langle A_1 \rangle / s_X$ leads to
  \begin{equation}
  \begin{aligned}
    \optover[r_X,s_X,s_Z] &\langle A_1 \rangle (\cos\theta_B + \sin\theta_B \frac{s_Z}{s_X}) \\
    \subto\quad &|\langle A_1 \rangle| \le |s_X|, \\
           &s_X^2 + s_Z^2 \le 1.
  \end{aligned}
  \end{equation}
  The optimum is given by $\langle A_1 \rangle \cos\theta_B \pm \sqrt{1-\langle A_1 \rangle^2}\sin\theta_B$ with $|r_X|=1$, $|s_X| = |\langle A_1 \rangle|$, and $s_Z = \sqrt{1-s_X^2}$.
  Fortunately, the maximum and the minimum are concave and convex functions in $\langle A_1 \rangle$, respectively, implying that they also give the boundary of the numerical range.
  Then, the volume is
  \begin{equation}
    \vol L(X\otimes X, X\otimes (\cos\theta_B X + \sin\theta_B Z)) = \int_{-1}^1 d\xi ~2 \sin\theta_B \sqrt{1-\xi^2} = \pi \sin\theta_B.
  \end{equation}
  Indeed, since the numerical range, and hence also its volume, changes continuously with the parameters $\theta_A$ and $\theta_B$, this is the limit of the general volume function.
  
  The other special case, i.e.~$(1-y^2)=0$, describes the eigenspace with maximal or minimal eigenvalue of $A_2$.
  Because $A_1$ and $A_2$ behave the same relative to each other, we consider states $\ket{\psi}$ with $\bra{\psi_\pm} A_1 \ket{\psi_\pm} = \pm 1$, which means that $\ket{\psi_+} = \alpha \ket{++} + \beta \ket{--}$ and $\ket{\psi_-} = \alpha \ket{+-} + \beta \ket{-+}$.
  Then, we obtain
  \begin{align}
    \bra{\psi_+} A_2 \ket{\psi_+} &= \cos\theta_A \cos\theta_B + \sin\theta_A \sin\theta_B (\alpha\beta^* + \alpha^* \beta), \\
    \bra{\psi_-} A_2 \ket{\psi_-} &= - \cos\theta_A \cos\theta_B + \sin\theta_A \sin\theta_B (\alpha\beta^* + \alpha^* \beta),
  \end{align}
  and hence, this solution yields line segments with $\pm \cos\theta_A \cos\theta_B - \sin\theta_A \sin\theta_B \le \langle A_1 \rangle \le \pm \cos\theta_A \cos\theta_B + \sin\theta_A \sin\theta_B$ for $\langle A_2 \rangle = \pm 1$, respectively.
  Let us keep this special case in mind for the final solution.
  
  Finally, we consider the case
  \begin{equation}
  \begin{aligned}
    \tilde{P}(x,y) = &2 (-1 + 2 x^2 + 2 y^2)^2 + \cos(4 \theta_A) + 4 \cos(2 \theta_A) (4 x y (x y - \cos\theta_A \cos\theta_B) + (-1 + 2 x^2 + 2 y^2) \cos(2 \theta_B) \\
                     &+ 16 x y (x y \cos(2 \theta_B) - \cos\theta_A \cos\theta_B (2 (-1 + x^2 + y^2) + \cos(2 \theta_B))) + \cos(4 \theta_B) = 0.
  \end{aligned}  
  \end{equation}
  Rotating the coordinate system by $\frac{\pi}{4}$, i.e.~setting $x = (\tilde{x}+\tilde{y})/\sqrt{2}$ and $y = (\tilde{x}-\tilde{y})/\sqrt{2}$, we find that the solution is given by two ellipses with semi-axes along the coordinate axes,
  \begin{equation}
  \begin{aligned}
    \tilde{P}(\tilde{x},\tilde{y}) = 8 &\left[ 1 - \cos^2(\theta_A+\theta_B) - \tilde{x}^2 (1 - \cos(\theta_A+\theta_B)) - \tilde{y}^2 (1 + \cos(\theta_A+\theta_B)) \right] \times \\
                                       &\left[ 1 - \cos^2(\theta_A-\theta_B) - \tilde{x}^2 (1 - \cos(\theta_A-\theta_B)) - \tilde{y}^2 (1 + \cos(\theta_A-\theta_B)) \right].
  \end{aligned}  
  \end{equation}
\begin{figure}[t!]
  \centering
  \includegraphics[width=.6\linewidth]{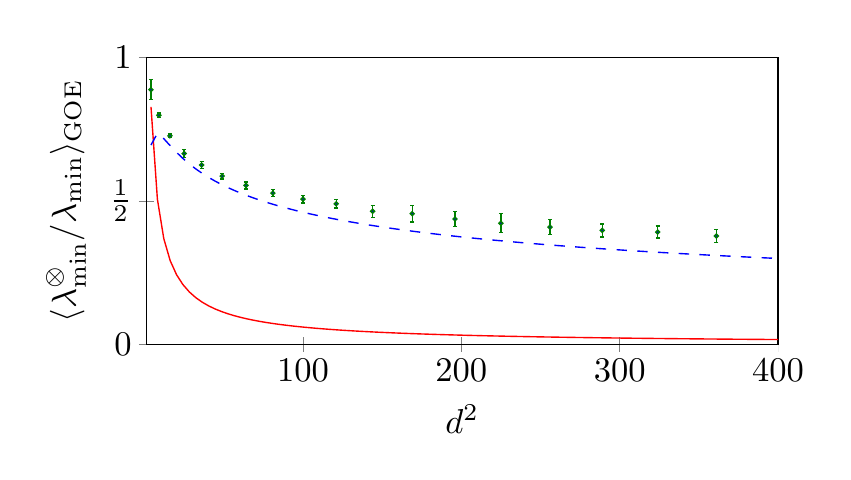}
  \vspace*{-0.5cm}
  \caption{Visual representation of the ratio $\left\langle {\lambda^\otimes_{\text{min}}}/{\lambda_{\text{min}}} \right\rangle_{\text{GOE}}$, see Table \ref{tab:goelambdaotimesmin}. The red solid curve is a lower bound of the ratio based on the diagonal entries ($2 \sqrt{2} d^{-1} \sqrt{\ln d}$) and the blue dashed line is an educated guess based on the multi-qubit \emph{upper} bound ($4 d^{-1/2} \ln d$) both derived in Ref.~\cite{czartowski2019}. The error bars are scaled by a factor of 10 to increase visibility.}\label{fig:goelambdaotimesmin}
\end{figure}
  To show that these curves are indeed reached by quantum states, we find those states explicitly.
  Rotating the coordinate system corresponds to considering the observables $A_\pm = (A_1 \pm A_2)/\sqrt{2}$.
  For the first ellipse with semi-axes of length $l_\pm^a = \sqrt{1 \pm \cos(\theta_A + \theta_B)}$, the states are given by $\ket{\phi_a} = \cos\varphi \ket{\phi_+^a}/\sqrt{\braket{\phi_+^a|\phi_+^a}} + \sin\varphi \ket{\phi_-^a}/\sqrt{\braket{\phi_-^a|\phi_-^a}}$, where
  \begin{align}
    \ket{\phi_+^a} &= \sin(\theta_A+\theta_B) (\ket{00} - \ket{11}) + \left[ 1 + \cos(\theta_A+\theta_B) + \sqrt{2 + 2\cos(\theta_A+\theta_B)}\right] (\ket{01} + \ket{10}), \\
    \ket{\phi_-^a} &= \sin(\theta_A+\theta_B) (\ket{00} - \ket{11}) - \left[ 1 - \cos(\theta_A+\theta_B) + \sqrt{2 - 2\cos(\theta_A+\theta_B)}\right] (\ket{01} + \ket{10}),
  \end{align}
  are eigenstates of $A_\pm$ with eigenvalues $l_\pm^a$, respectively.
  There is an exception when $\theta_A+\theta_B = \pi$ and hence, $\ket{\phi_+^a} = 0$.
  In this case, the ellipse is just a line segment of length $\sqrt{2}$ along the $y$-axis.
  Since the respective eigenvectors of $A_-$ with eigenvalues $\pm\sqrt{2}$ obey $\langle A_+ \rangle = 0$, the line is traced out by mixtures of these states.
  Otherwise, it holds that $\braket{\phi_+^a|\phi_-^a}/\sqrt{\braket{\phi_+^a|\phi_+^a}\braket{\phi_-^a|\phi_-^a}} = -1/\sqrt{2}$, and thus, $\braket{\phi_a|\phi_a} = 1-\sqrt{2}\sin\varphi \cos\varphi$, as well as $\bra{\phi_\pm^a} A_\mp \ket{\phi_\pm^a} = 0$ and hence, $\bra{\phi_a} A_+ \ket{\phi_a} = l_+^a (\cos^2\varphi - \sqrt{2}\sin\varphi \cos\varphi)$ and $\bra{\phi_a} A_- \ket{\phi_a} = l_-^a (\cos^2\varphi - \sqrt{2}\sin\varphi \cos\varphi)$.
  Thus, the states $\ket{\phi_a}$ satisfy
  \begin{align}
    \langle A_+ \rangle &= \frac{\bra{\phi_a} A_+ \ket{\phi_a}}{\braket{\phi_a|\phi_a}} = l_+^a \frac{\cot\varphi (\cot\varphi - \sqrt{2})}{1 + \cot\varphi (\cot\varphi - \sqrt{2})}, \\
    \langle A_- \rangle &= \frac{\bra{\phi_a} A_- \ket{\phi_a}}{\braket{\phi_a|\phi_a}} = l_-^a \frac{\tan\varphi (\tan\varphi - \sqrt{2})}{1 + \tan\varphi (\tan\varphi - \sqrt{2})},
  \end{align}
  which indeed defines points on the desired ellipse because $(\langle A_+ \rangle / l_+^a)^2 + (\langle A_- \rangle / l_-^a)^2 = 1$.
  Furthermore, with $\varphi$ varying continuously from $0$ to $\pi$, we observe the continuous variation of $(\langle A_+ \rangle, \langle A_- \rangle)$ as $(1,0) \rightarrow (0,-1) \rightarrow (-1,0) \rightarrow (0,1) \rightarrow (1,0)$ where the change in-between each step is monotonic.
  Thus, the states indeed trace out the ellipse.
  For the second ellipse, an analogous argument holds for the states $\ket{\phi_b} = \cos\varphi \ket{\phi_+^b}/\sqrt{\braket{\phi_+^b|\phi_+^b}} + \sin\varphi \ket{\phi_-^b}/\sqrt{\braket{\phi_-^b|\phi_-^b}}$, where
  \begin{align}
    \ket{\phi_+^b} &= \sin(\theta_A-\theta_B) (\ket{00} + \ket{11}) - \left[ 1 + \cos(\theta_A-\theta_B) - \sqrt{2 + 2\cos(\theta_A-\theta_B)}\right] (\ket{01} - \ket{10}), \\
    \ket{\phi_-^b} &= \sin(\theta_A-\theta_B) (\ket{00} + \ket{11}) + \left[ 1 - \cos(\theta_A-\theta_B) - \sqrt{2 - 2\cos(\theta_A-\theta_B)}\right] (\ket{01} - \ket{10}).
  \end{align}
  Obviously, the convex hull of these two ellipses is bounded partially by straight line segments.
  These line segments must indeed correspond to the maximal and minimal eigenvalues of $A_1$ and $A_2$ which we considered in the special case $y = \pm 1$ since there cannot be any state beyond $-1 \le x,y \le 1$.
  To compute the volume, we divide the convex hull of the ellipses into a polygon with eight vertices and the tips of the ellipses, whose volume can be easily computed via integration; see Fig.~\ref{fig:prod_obs}.
  This leads to the final result
  \begin{equation}
  \begin{aligned}
    \vol L(A_1, A_2) &= 2 \left[ \cos\theta_- - \cos\theta_+ + |\sin\theta_-| \arccos|\cos\frac{\theta_-}{2}| + |\sin\theta_+| \arccos|\sin\frac{\theta_+}{2}| \right] \\
                     &= 2 \left[ \cos\theta_- - \cos\theta_+ + |\frac{\theta_-}{2} \sin\theta_-| + |\left( \frac{\theta_+}{2}-\frac{\pi}{2} \right) \sin\theta_+| \right],
  \end{aligned}
  \end{equation}
  where $\theta_\pm = \theta_A \pm \theta_B$.
\end{proof}
 
\section{Minimal separable expectation value}\label{app:minsepexpval}
In this Appendix, numerical results concerning generic random observables of size $D = d^2$ are presented.
The obtained data are listed in Table~\ref{tab:goelambdaotimesmin} and visualized in Fig.~\ref{fig:goelambdaotimesmin}.

\renewcommand{\arraystretch}{1.4}
\begin{table}[h!]
	\centering
	\begin{tabular}{c|c|c||c|c|c}
$d$ & $\left\langle {\lambda^\otimes_{\text{min}}}/{\lambda_{\text{min}}} \right\rangle_{\text{GOE}}$ & sample size  & $d$ & $\left\langle {\lambda^\otimes_{\text{min}}}/{\lambda_{\text{min}}} \right\rangle_{\text{GOE}}$ & sample size\\[3pt] \hline
 2 & 0.8871(4) & 15 000 & 11 & 0.490(1) & 300 \\
 3 & 0.7981(8) &  15 000 & 12 & 0.464(1) & 125 \\
 4 & 0.7265(6) &  15 000 & 13 & 0.4554(3) & 40 \\
 5 & 0.665(1) & 2 000 & 14 & 0.4368(3) & 40 \\
 6 & 0.625(1) & 2 000 & 15 & 0.4222(3) & 40 \\
 7 & 0.586(1) & 2 000 & 16 & 0.4088(3) & 40 \\
 8 & 0.554(1) & 1 000 & 17 & 0.3974(2) & 40 \\
 9 & 0.527(1) & 750 & 18 & 0.3915(2) & 40 \\
 10 & 0.506(1) & 500 & 19 & 0.3778(2) & 40 \\\hline
	\end{tabular}
	\caption{Ratio of the minimal separable expectation value $\lambda^\otimes_{\min}$ to the minimal eigenvalue $\lambda_{\min}$, averaged over samples of random symmetric matrices $X$ from the Gaussian Orthogonal Ensemble. Results were obtained by numerical global optimalization of Eq.~\eqref{eqn:lotimesexpr}.}
	
	\label{tab:goelambdaotimesmin}
\end{table}
\renewcommand{\arraystretch}{1}

\twocolumngrid

\bibliography{Bibliography}

\end{document}